\documentclass[11pt]{article}
\usepackage{fullpage}

\usepackage{hyperref}
\usepackage[utf8]{inputenc} %
\usepackage[T1]{fontenc}    %
\usepackage{url}            %
\usepackage{mathtools}
\usepackage{booktabs}       %
\usepackage{amsfonts}       %
\usepackage{nicefrac}       %
\usepackage{microtype}      %
\usepackage{xcolor}         %
\usepackage{MnSymbol}
\usepackage{xspace}

\usepackage{tikz}
\usepackage{pgfplots}
\usetikzlibrary{calc, arrows.meta}
\pgfplotsset{compat=1.18}

\usepackage{pict2e,picture,graphicx}
\usepackage[overload]{empheq}
\usepackage{microtype}
\usepackage{graphicx}
\usepackage{subcaption}
\usepackage{booktabs}
\usepackage{amsmath}
\usepackage{cases}
\usepackage{mathtools}
\usepackage{amsthm}
\usepackage{thm-restate}
\usepackage{enumitem}

\allowdisplaybreaks
\usepackage{xparse}
\usepackage[capitalize, noabbrev]{cleveref}
\usepackage{wrapfig}
\usepackage{bbm}
\usepackage{yfonts}
\usepackage{nicefrac} 
\usepackage{multirow}
\usepackage{bm}
\usepackage{bbm}
\usepackage{subcaption}

\hypersetup{
	colorlinks = true,
	linkcolor = black,
	citecolor = black,
	linktocpage = true,
	urlcolor = black
}
\usepackage{comment}

\newcommand{\CFont}[1]{{\textup{\textsf{#1}}}\xspace}
\newcommand{\PPAD}{\CFont{PPAD}}
\newcommand{\Pclass}{\CFont{P}}
\newcommand{\PLS}{\CFont{PLS}}
\newcommand{\CLS}{\CFont{CLS}}
\newcommand{\NP}{\CFont{NP}}
\newcommand{\NOR}{\textit{NOR}}
\newcommand{\PURIFY}{\textit{PURIFY}}

\newcommand{\TFNP}{\CFont{TFNP}}
\newcommand{\EOTL}{\CFont{End-of-the-Line}}

\newcommand{\GDAFP}{\CFont{GDA}}

\newcommand{\LINVI}{\CFont{LinVI}}
\newcommand{\PURECIRC}{\CFont{Pure-Circuit}}

\newcommand{\GDA}{\CFont{GDA}}

\newcommand{\Gcal}{\mathcal{G}}

\newcommand{\poly}{\textsf{poly}}

\newcommand{\Naturals}{\mathbb{N}}

\theoremstyle{plain}

\makeatletter
\let\cref@old@stepcounter\stepcounter
\def\stepcounter#1{%
  \cref@old@stepcounter{#1}%
  \cref@constructprefix{#1}{\cref@result}%
  \@ifundefined{cref@#1@alias}%
    {\def\@tempa{#1}}%
    {\def\@tempa{\csname cref@#1@alias\endcsname}}%
  \protected@edef\cref@currentlabel{%
    [\@tempa][\arabic{#1}][\cref@result]%
    \csname p@#1\endcsname\csname the#1\endcsname}}
\makeatother

\theoremstyle{plain}
\newtheorem{theorem}{Theorem}[section]

\newtheorem{lemma}[theorem]{Lemma}

\newtheorem*{theorem-non}{Theorem}

\theoremstyle{definition}

\newtheorem{problem}{Problem}

\theoremstyle{remark}
\newtheorem{remark}[theorem]{Remark}

\usepackage[style=alphabetic,natbib=true,maxcitenames=2, maxbibnames=10,backend=bibtex]{biblatex}
\allowdisplaybreaks

\title{The Complexity of Min-Max Optimization with Product Constraints}

\author{
 Martino Bernasconi \\
 Bocconi University\\
 {\textcolor{black}{\small\texttt{martino.bernasconi@unibocconi.it}}}
 \and 
 Matteo Castiglioni  \\
 Politecnico di Milano\\
 {\textcolor{black}
 {\small\texttt{matteo.castiglioni@polimi.it}}}
}

\date{}

\begin{document}

\maketitle

\begin{abstract}
    We study the computational complexity of the problem of computing local min-max equilibria of games with a nonconvex-nonconcave utility function $f$. From the work of \citet*{daskalakis2021complexity}, this problem was known to be hard in the restrictive case in which players are required to play strategies that are jointly constrained, leaving open the question of its complexity under more natural constraints. In this paper, we settle the question and show that the problem is \PPAD-hard even under product constraints and, in particular, over the hypercube. 
\end{abstract}

\pagenumbering{gobble} 

\clearpage
\newpage
\pagenumbering{arabic}

\section{Introduction}

This work studies the fundamental problem of finding local min-max points of a continuous and smooth function
\[
f:[0,1]^d\times[0,1]^d\to[0,1].
\]

The computation of saddle points is a fundamental primitive in mathematical optimization, with important practical applications to many problems,
ranging from training generative adversarial networks (GANs) \citep{goodfellow2014generative, arjovsky2017wasserstein} and adversarial training \citep{madry2017towards, sinha2018certifying} to alignment \citep{munos2024nash}.

In the last decade, the importance of such problems has motivated the theoretical study of algorithms that converge to such solutions. For instance, in the nonconvex-concave setting, many algorithms converge at polynomial rates \citep{ostrovskii2021efficient, jin2021nonconvex, lin2020gradient, thekumparampil2019efficient, lin2025two}. For the more general nonconvex-nonconcave setting, results are limited to local convergence \citep{mertikopoulos2018optimistic, wang2019solving, balduzzi2018mechanics, mazumdar2025finding} or require additional structure \citep{diakonikolas2021efficient, abernethy2021last, kalogiannis2024learning}.

Despite its importance, the exact algorithmic characterization of this problem remains elusive. In particular, no known algorithms provably find $\epsilon$-approximate solutions in time $\poly(1/\epsilon)$.

This theoretical gap mirrors a well-documented frustration in practice, particularly when training GANs and adversarial systems \citep{xing2021algorithmic, kodali2017convergence, hsieh2021limits, vlatakis2019poincare}. The difficulty is often attributed to the non-convexity of local solutions to min-max problems. However, this explanation feels unsatisfactory since finding stationary points in non-convex minimization is quite tractable (see also the discussion in \cite{daskalakis2021complexity}). In this paper, we prove a more fundamental reason for this failure: finding local min-max points is computationally hard. In particular, we show that $\poly(1/\epsilon)$ algorithms cannot exist unless $\PPAD=\Pclass$.

More formally, we show that the following problem called \GDA (Gradient Descent-Ascent, see \Cref{def:gdafp}), is \PPAD-hard: given a differentiable function $f:[0,1]^{d}\times[0,1]^d\to[0,1]$, find $(x,y)$ such that
\[
\max_{x'\in[0,1]^d}\langle\nabla_xf(x,y),x'-x\rangle\le\epsilon\quad\text{and}\max_{y'\in[0,1]^d}-\langle\nabla_yf(x,y),y'-y\rangle\le\epsilon.
\]
This problem requires finding a point $(x,y)$ where both the max player $x$ and the min player $y$ are at a local optimum, or equivalently, a fixed point of the gradient-descent-ascent dynamic.

This problem was first analyzed through the lens of computational complexity by \citet{daskalakis2021complexity}, who proved that it lies in \PPAD. They also proved \PPAD-hardness, but only for the case where the $x$ and $y$ players are restricted to play in a coupled (i.e., non-product) subset of the hypercube $[0,1]^{2d}$.

As observed in many subsequent works \citep{babichenko2021settling, fearnley2022complexity, bernasconi2024roleconstraintscomplexityminmax, hollender2024complexity, anagnostides2025complexity}, characterizing the complexity of min-max optimization in the more natural setting of product constraints has remained a fundamental open problem.
Recently, \citet{bernasconi2024roleconstraintscomplexityminmax} simplified the reduction by \citet{daskalakis2021complexity}, improving and generalizing the result while also emphasizing the central role of coupled constraints in the reduction.

In this paper, we finally resolve the open problem and prove that finding local min-max points is \PPAD-hard, even under product constraints and in particular over the hypercube.

\paragraph*{Computational Complexity}

From a formal standpoint, \GDAFP is a total search problem in \NP; that is, a problem in \NP that always has a solution. \citet{megiddo1991total} shows that this class of problems, \TFNP, is unlikely to have any complete problems. \citet{papadimitriou1994complexity} subsequently defined subclasses of \TFNP, each corresponding to a nonconstructive (and inefficient) proof principle guaranteeing totality. The most relevant to our work is \PPAD, which relies on a parity argument on a directed graph. This class has been found to be the right one for various fixed-point problems, and most famously, Nash equilibria \citep{daskalakis2009complexity, chen2009settling}.

Recently, \citet{fearnley2021complexity, fearnley2022complexity} proved that finding a fixed point of gradient descent of a smooth and continuous function is complete for the class $\CLS=\PPAD\cap\PLS$. This shows that GD is simultaneously a fixed-point problem and a local-search problem. In particular, it is easy to see that the problem of finding a fixed point of gradient descent can be reduced to the following (informal) problem (which was used to define \CLS in \citep{daskalakis2011continuous}): given a continuous ``potential function'' $p:[0,1]^d\to[0,1]$ and a continuous ``improvement function'' $g:[0,1]^d\to[0,1]^d$, find a $x$ such that $p(g(x))\ge p(x)-\epsilon$. Clearly, fixed points of $g$ are solutions to this problem. 
In this paper, we prove that, unlike gradient descent, gradient descent-ascent (GDA) is a pure fixed-point problem without a potential function. This aligns with the well-documented phenomenon observed in min-max problems, where algorithms naturally tend to cycle.

\subsection*{Technical Challenges}

The recent hardness results for the computation of Nash equilibria are driven by the idea of \emph{imitation} \citep{babichenko2021settling, babichenko2016query, rubinstein2017settling, rubinstein2015inapproximability}. At a high level, given a vector-valued function $F$ of which fixed points are hard to find, the $x$ player (fixed-point player) minimizes the distance between $x$ and $F(y)$, and the $y$ player (imitation player) minimizes the distance between $x$ and $y$.  However, this approach requires different objectives for the $x$ and $y$ players. The fundamental challenge in min-max problems is that we are forced to use opposite objectives between the two players. Thus, forcing one player to imitate the other would compel the other to escape. This is one of the main conceptual obstacles to overcome.

Intuitively, we give up trying to enforce $x\simeq y$. Instead, our construction combines two \PPAD-hard problems: \LINVI \citep{bernasconi2024roleconstraintscomplexityminmax}, which is hard when $x\simeq y$, and \PURECIRC \citep{deligkas2022pure}.
In particular, we allow \PURECIRC gadgets to fail, but, in that case, the construction forces $x\simeq y$ and thus a solution to the \LINVI problem, which is also \PPAD-hard.
In the next section, we provide a high-level technical overview of our proof.

\section{Proof Overview}\label{sec:overview}

All known reductions for the coupled constrained setting \citep{daskalakis2021complexity,bernasconi2024roleconstraintscomplexityminmax,anagnostides2025complexity} use a utility function of the form $f(x,y)=\langle F(x),x-y\rangle$.
In particular, \citet{bernasconi2024roleconstraintscomplexityminmax} made explicit the simple connection between min-max optimization with coupled constraints and Variational Inequalities (VIs). A VI is the problem of finding a point $z$ such that $\langle F(z),z'-z\rangle\ge 0$ for all $z'\in[0,1]^d$. It is well-known that many choices of the function $F$, e.g., linear, make the associated variational inequality \PPAD-hard.
Taking $F$ as a VI operator, combined with the coupled constraint $x\simeq y$, is sufficient to prove \PPAD-hardness under coupled constraints. Indeed, we can observe that $\nabla_x f(x,y)=F(x)+J_F(x)(x-y)$ and $\nabla_y f(x,y)=-F(x)$. The ``noise term'' $\Delta(x,y)=J_F(x)(x-y)$ is negligible under the coupled constraints, rendering the problem essentially equivalent to the original VI instance.\footnote{In this section, we use the symbol $\Delta$ quite informally to denote quantities that we would like to keep small.} Keeping the term $\Delta(x,y)$ in $\nabla_x f(x,y)$ under control is precisely the purpose of the $x\simeq y$ constraint.

The main challenge in getting rid of the coupled constraint is finding an alternative means of dealing with the noise component $\Delta(x,y)$. The following discussion aims to give an intuitive overview of our techniques.

\subsection*{\PURECIRC}

We start by considering a different operator $F$. Specifically, we reduce from \PURECIRC \citep{deligkas2022pure}, a problem that involves finding an assignment to nodes in a graph that satisfies a set of gates (constraints), which impose restrictions on the values of the vertices. The \PURECIRC problem is particularly nice as: (1) it works on continuous values $[0,1]$ but only cares about pure values $\{0,1\}$, treating intermediate values $(0,1)$ as ``garbage'' (denoted by $\bot$); and (2) its dependency graph is very structured. This allows us to reason concretely about the influence between variables and deal with the $\Delta(x,y)$ component.

For each node $v$ in the \PURECIRC instance, we introduce a local game played by two variables $x^v,y^v$ and interpret the assignment to $v$ as $1$ if $(x^v-y^v)^2$ is large, and $0$ if $(x^v-y^v)^2$ is small. For each output node $q$ of a gate, it is not difficult to implement a smooth function $s_q(x,y)$ that computes the numerical values that should be assigned to the output of that gate. In particular, $s_q(x,y)$ depends only on the values $(x^v-y^v)^2$ for nodes $v$ that are inputs to $q$'s gate.

The challenge is to ensure consistency: we must enforce that, at solutions of our instance, the values $(x^q-y^q)^2$ on output nodes $q$ are consistent with the values computed by $s_q(x,y)$. In this language, consistency means that if $s_q(x,y)= 1$, then $(x^q-y^q)^2$ must be large, and if $s_q(x,y)=0$, then $(x^q-y^q)^2$ must be small.

\subsection*{The Blueprint}
Here, we outline an initial attempt at a reduction. We define the utility function as 
\begin{align}\label{eq:firstutil}
f(x,y)=\sum_{q\in V}s_q(x,y)H_q(x,y),
\end{align}
where, crucially, $s_q(x,y)$ is the function computing the value of the gate of which $q$ is the output (and thus depends only on the variables at the inputs of node $q$), and $H_q(x,y)$ is a ``linking function'' that depends only on the variables related to node $q$ itself (the terminology for $H_q$ will be explained shortly).

We can write $f$ to better highlight the dependence on the variables $(x^q,y^q)$ for a fixed node $q$. Indeed:
\begin{align}\label{eq:decompos_intro}
f(x,y) = \underbrace{\vphantom{\sum_{\tilde u:\text{ $q$ is input to $\tilde u$}}}s_q(x,y)H_q(x,y)}_{(A)}+\underbrace{\sum_{\tilde u:\text{ $q$ is input to $\tilde u$}} s_{\tilde u}(x,y)H_{\tilde u}(x,y)}_{(B)}+\underbrace{\sum_{\text{all other nodes $u$}} s_u(x,y)H_u(x,y)}_{(C)}.
\end{align}

In $(C)$, the variables $(x^q,y^q)$ are not present, while in $(B)$ they are present only in $s_{\tilde u}$ and in $(A)$ only in $H_q(x,y)$. That is why we named $H_q$ the ``linking function'', since without it, the partial derivatives of $f$ with respect to $(x^q,y^q)$ would not depend on $s_q$. In particular, to maintain consistency, we want the players' incentives at node $q$ to depend only on the variables from the nodes to which $q$ is an output.

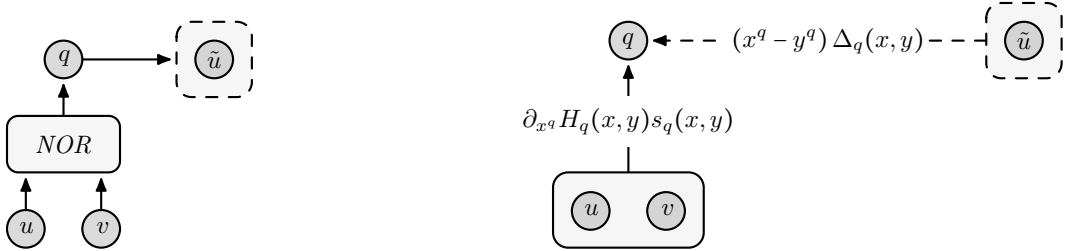
\begin{figure}
\centering
\begin{subfigure}{.4\textwidth}
  \centering
  \scalebox{0.95}{       
\tikzset{
  >=Latex[round],
  every picture/.style={line width=0.85pt,line cap=round,line join=round},
  every node/.style={font=\small},
}

\begin{tikzpicture}[x=0.75pt,y=0.75pt,yscale=-1,xscale=1]

\draw [fill=gray!7,fill opacity=1] (230,166) .. controls (230,162.69) and (232.69,160) .. (236,160) -- (284,160) .. controls (287.31,160) and (290,162.69) .. (290,166) -- (290,184) .. controls (290,187.31) and (287.31,190) .. (284,190) -- (236,190) .. controls (232.69,190) and (230,187.31) .. (230,184) -- cycle ;
\draw [->]   (240,210) -- (240,193) ;
\draw  [->]  (280,210) -- (280,193) ;
\draw [->]   (260,160) -- (260,143) ;
\draw  [fill={rgb, 255:red, 155; green, 155; blue, 155 }  ,fill opacity=0.37 ] (270,220) .. controls (270,214.48) and (274.48,210) .. (280,210) .. controls (285.52,210) and (290,214.48) .. (290,220) .. controls (290,225.52) and (285.52,230) .. (280,230) .. controls (274.48,230) and (270,225.52) .. (270,220) -- cycle ;
\draw  [fill={rgb, 255:red, 155; green, 155; blue, 155 }  ,fill opacity=0.37 ] (230,220) .. controls (230,214.48) and (234.48,210) .. (240,210) .. controls (245.52,210) and (250,214.48) .. (250,220) .. controls (250,225.52) and (245.52,230) .. (240,230) .. controls (234.48,230) and (230,225.52) .. (230,220) -- cycle ;
\draw [fill=gray!7,fill opacity=1, dash pattern={on 4.5pt off 4.5pt}] (320,118) .. controls (320,113.58) and (323.58,110) .. (328,110) -- (352,110) .. controls (356.42,110) and (360,113.58) .. (360,118) -- (360,142) .. controls (360,146.42) and (356.42,150) .. (352,150) -- (328,150) .. controls (323.58,150) and (320,146.42) .. (320,142) -- cycle ;
\draw  [->]  (270,130) -- (317,130) ;
\draw  [fill={rgb, 255:red, 155; green, 155; blue, 155 }  ,fill opacity=0.37 ] (330,130) .. controls (330,124.48) and (334.48,120) .. (340,120) .. controls (345.52,120) and (350,124.48) .. (350,130) .. controls (350,135.52) and (345.52,140) .. (340,140) .. controls (334.48,140) and (330,135.52) .. (330,130) -- cycle ;
\draw  [fill={rgb, 255:red, 155; green, 155; blue, 155 }  ,fill opacity=0.37 ] (250,130) .. controls (250,124.48) and (254.48,120) .. (260,120) .. controls (265.52,120) and (270,124.48) .. (270,130) .. controls (270,135.52) and (265.52,140) .. (260,140) .. controls (254.48,140) and (250,135.52) .. (250,130) -- cycle ;

\draw (260,175) node   [align=left] {\NOR};
\draw (240.5,219.5) node    {$u$};
\draw (281,219.5) node    {$v$};
\draw (340,130) node    {$\tilde{u}$};
\draw (260.5,129.5) node    {$q$};

\end{tikzpicture}}
  \caption{An example of a simple \PURECIRC instance. In the dashed box we have all other nodes $\tilde u$ and gates of the circuit.}
  \label{fig:sub1}
\end{subfigure}%
\hfill
\begin{subfigure}{.55\textwidth}
  \centering
  \scalebox{0.95}{       
\tikzset{
  >=Latex[round],
  every picture/.style={line width=0.85pt,line cap=round,line join=round},
  every node/.style={font=\small},
}

\begin{tikzpicture}[x=0.75pt,y=0.75pt,yscale=-1,xscale=1]

\draw [->] (290,190) -- (290,133) ;
\draw  [->, dash pattern={on 4.5pt off 4.5pt}]  (480,120) -- (390,120) -- (303,120) ;
\draw  [fill=gray!7,fill opacity=1, dash pattern={on 4.5pt off 4.5pt}] (480,108) .. controls (480,103.58) and (483.58,100) .. (488,100) -- (512,100) .. controls (516.42,100) and (520,103.58) .. (520,108) -- (520,132) .. controls (520,136.42) and (516.42,140) .. (512,140) -- (488,140) .. controls (483.58,140) and (480,136.42) .. (480,132) -- cycle ;
\draw  [fill={rgb, 255:red, 155; green, 155; blue, 155 }  ,fill opacity=0.37 ] (490,120) .. controls (490,114.48) and (494.48,110) .. (500,110) .. controls (505.52,110) and (510,114.48) .. (510,120) .. controls (510,125.52) and (505.52,130) .. (500,130) .. controls (494.48,130) and (490,125.52) .. (490,120) -- cycle ;
\draw  [fill={rgb, 255:red, 155; green, 155; blue, 155 }  ,fill opacity=0.37 ] (280,120) .. controls (280,114.48) and (284.48,110) .. (290,110) .. controls (295.52,110) and (300,114.48) .. (300,120) .. controls (300,125.52) and (295.52,130) .. (290,130) .. controls (284.48,130) and (280,125.52) .. (280,120) -- cycle ;
\draw [fill=gray!7,fill opacity=1]  (250,198) .. controls (250,193.58) and (253.58,190) .. (258,190) -- (322,190) .. controls (326.42,190) and (330,193.58) .. (330,198) -- (330,222) .. controls (330,226.42) and (326.42,230) .. (322,230) -- (258,230) .. controls (253.58,230) and (250,226.42) .. (250,222) -- cycle ;
\draw  [fill={rgb, 255:red, 155; green, 155; blue, 155 }  ,fill opacity=0.37 ] (300,210.5) .. controls (300,204.98) and (304.48,200.5) .. (310,200.5) .. controls (315.52,200.5) and (320,204.98) .. (320,210.5) .. controls (320,216.02) and (315.52,220.5) .. (310,220.5) .. controls (304.48,220.5) and (300,216.02) .. (300,210.5) -- cycle ;
\draw  [fill={rgb, 255:red, 155; green, 155; blue, 155 }  ,fill opacity=0.37 ] (260,210.5) .. controls (260,204.98) and (264.48,200.5) .. (270,200.5) .. controls (275.52,200.5) and (280,204.98) .. (280,210.5) .. controls (280,216.02) and (275.52,220.5) .. (270,220.5) .. controls (264.48,220.5) and (260,216.02) .. (260,210.5) -- cycle ;

\draw  [draw opacity=0][fill={rgb, 255:red, 255; green, 255; blue, 255 }  ,fill opacity=1 ]  (337.5,105) -- (448.5,105) -- (448.5,135) -- (337.5,135) -- cycle  ;
\draw (396,120) node    {$\left( x^{q} -y^{q}\right) \Delta _{q}( x,y)$};
\draw  [draw opacity=0][fill={rgb, 255:red, 255; green, 255; blue, 255 }  ,fill opacity=1 ]  (222.5,149.5) -- (357.5,149.5) -- (357.5,176.5) -- (222.5,176.5) -- cycle  ;
\draw (290,163) node    {$\partial _{x^{q}} H_{q}( x,y) s_{q}( x,y)$};
\draw (500,120) node    {$\tilde{u}$};
\draw (290,120) node    {$q$};
\draw (270.5,210) node    {$u$};
\draw (311,210) node    {$v$};

\end{tikzpicture}}
  \caption{Influence of $x^q$ on $f$. The dashed arrow depicts the noise term, while the solid arrow represents the desired one (from the $(B)$ and $(A)$ terms in \Cref{eq:decompos_intro}, respectively).}
  \label{fig:sub2}
\end{subfigure}
\caption{Dependency graph of a simple \PURECIRC instance and influence graph of the variable $x_q$ on the utility function $f$ defined in \Cref{eq:firstutil}.}
\label{fig:test}
\end{figure}

Computing the partial derivatives of the utility function, we thus have contributions only from $(A)$ and $(B)$. The term $(A)$ represents the desired influence on node $q$. On the other side, the ones coming from $(B)$ are the noise components that introduce an extra dependency on the variables at node $q$. We encapsulate this noise term in a generic term $\Delta_q(x,y)$, which we would like to depend only on $s_q$, mirroring the dependency structure of the \PURECIRC instance. The partial derivatives are thus
\[
\partial_{x^q} f(x,y)={\partial_{x^q} H_q(x,y)} \cdot s_q(x,y)+2(x^q-y^q)\Delta_q(x,y),
\]
and
\[
{\partial_{y^q} f(x,y)=\partial_{y^q} H_q(x,y)}\cdot s_q(x,y)-2(x^q-y^q)\Delta_q(x,y),
\]
where we used that $\partial_{x^q}s_v(x,y)=-\partial_{y^q}s_v(x,y)$ and that $s_v$ depends on $(x^q-y^q)^2$.
In deriving these partial derivatives, we relied on two crucial features of \PURECIRC: (1) $\Delta_q$ contains only derivatives of $s_{\tilde u}$ at a few other nodes $\tilde u$ (those that take $q$ as an input), and (2) $s_q$ does not depend on the variables relative to node $q$. We also refer to \Cref{fig:test} for an illustration.

Now, we would really like the term $|\Delta_q(x,y)|$ to be small. For instance, if we assume $|\Delta_q(x,y)|\ll 1$ and consider the simple linking function $H_q(x,y)=x^q+y^q$, the first-order optimality conditions quickly yield the desired consistency: 
\begin{itemize} 
\item If $s_q(x,y)=1$, then 
\[
\partial_{x^q}f(x,y)(x'-x^q)=(1+2(x^q-y^q)\Delta_q(x,y))(x'-x^q)\le 0\quad\forall x',
\]
implies that $x^q=1$. On the other hand, 
\[
-\partial_{y^q}f(x,y)(y'-y^q)=-(1-2(x^q-y^q)\Delta_q(x,y))(y'-y^q)\le 0\quad\forall y',
\]
implies that $y^q=0$ and that $(x^q-y^q)^2$ is large. 
\item If $s_q(x,y)=0$, then 
\[
\partial_{x^q}f(x,y)(x'-x^q)=2(x^q-y^q)\Delta_q(x,y)(x'-x^q)\le 0\quad\forall x',
\]
and 
\[
-\partial_{y^q}f(x,y)(y'-y^q)=2(x^q-y^q)\Delta_q(x,y)(y'-y^q)\le 0\quad\forall y'.
\]
These two equations imply that $x^q=y^q$ (depending on the sign of $(x^q-y^q)\Delta_q(x,y)$ they are both equal to either $0$ or $1$), and thus $(x^q-y^q)^2$ is small.\footnote{This is true only if $\Delta_q(x,y)\neq0$, but we can ignore this issue for now. This problem will be handled in the final construction by taking copies of the variables.}
\end{itemize} 
This would essentially conclude the reduction. However, it is almost obvious that we cannot guarantee that $|\Delta_q(x,y)|$ is always small. Previous approaches \citep{daskalakis2021complexity, bernasconi2024roleconstraintscomplexityminmax, anagnostides2025complexity} essentially selected $H_q(x,y)=x^q-y^q$, which ensures $\Delta_q$ is small only within an exogenously imposed subset of the hypercube with $x\simeq y$. 

We follow a different approach here using a different function $H_q$ and another way to deal with $\Delta_q$.
We defer for a moment the discussion on the construction of $H_q$ and consistency, and explain the trick we used to deal with $\Delta_q$.

\subsection*{Guessing and Copies}

To overcome the effect of the extra term $\Delta_q$, we use a trick based on repetition (copies) and regularization. We make $n$ copies $x^q=\{x^q_{i}\}_{i\in[n]}$ (and similarly for $y^q$) of the variables at each node $q$, and define the gate functions $s_q$ in terms of $\|x^q-y^q\|^2$. We then add a quadratic regularizer $\varphi(x,y)=\sum_{q\in V}\sum_{i\in [n]} M_i(x_i^q-y_i^q)^2$, where $\{M_i\}_{i\in[n]}$ form a regular grid of $n$ points from $-M$ to $M$ (the range of $M$ will be discussed shortly).  The $M_i$'s act as ``guesses'' for the term $\Delta_q$. Indeed, the partial derivatives  of $f(x,y)=\sum_{q\in V} s_q(x,y)H_q(x,y)+\varphi(x,y)$ now become: 
\begin{subequations}\label{eq:tmp2}
    \begin{equation}
        \partial_{x_i^q}f(x,y)={\partial_{x_i^q} H_q(x,y)} \cdot s_q(x,y)+2(x_i^q-y_i^q)(\Delta_q(x,y)+M_i), 
    \end{equation}
and
    \begin{equation}
        \partial_{y_i^q}f(x,y)={\partial_{y_i^q} H_q(x,y)}\cdot s_q(x,y)-2(x_i^q-y_i^q)(\Delta_q(x,y)+M_i).
    \end{equation}
\end{subequations}
Why does this help? If $|\Delta_q|$ grows slowly enough in $n$ (e.g., sublinearly), there would be at least one $i\in [n]$ (or a few, to be more precise) for which $|\Delta_q+M_i|$ is small, effectively eliminating the $\Delta_q$ term for those copies. However, the more copies $i$ we add, the larger the terms $H_v$, $v \in V$, become (as they depend on more variables), and in turn, so does $\Delta_q$.

\subsection*{Bounding $|\Delta_q|$}

For coordinates $i$ where $|M_i+\Delta_q|$ is large, the dominant term in the gradient (stemming from the regularizer) has different signs for the players and thus aligned objectives. This forces the equilibrium to have $|x^q_i-y^q_i|$ small. In particular, we will prove that at equilibrium, $|x^q_i-y^q_i|=O(|M_i+\Delta_q|^{-1})$, which shows that $\|x^q-y^q\|_1$ is upper bounded by a harmonic series and is thus of order $O(\log n)$. Now we return to bounding $|\Delta_q|$. Up to now, the gadget $H_q$ has been kept as general as possible, as we have used only that it depends on the variables $x^q$ and $y^q$. 
Now, we choose a particular structure for $H_q$ and require the linking function to be of the form $H_q(x,y)=\sum_{i\in[n]}G_i^q(x,y)\cdot(y_i^q-x_i^q)$. With this choice we have, 
\[
|H_q(x,y)|=O(\|y^q-x^q\|_1)=O(\log n).
\]
In particular, since this holds for every node $q\in V$, we can see from \Cref{eq:decompos_intro} that $|\Delta_q|=O(\log n)$ for every $q\in V$, where we ignore the dependency on other parameters. 
Thus, it is sufficient to pick any range of the grid $M$ that grows more than logarithmically in $n$ to ensure that, for many indices $i$, $M_i$ is close to $-\Delta_q$, namely $|\Delta_q+M_i|=O(1)$. We note that we only need to prove a sublinear upper bound for this quantity.

\subsection*{Hiding Consistency Violations in the \LINVI Gadget}

Now, we must show the consistency of the nodes in the circuits, namely that 1) if $s_q(x,y)=1$ then $\|x^q-y^q\|^2$ is large, and that 2) if $s_q(x,y)=0$ then $\|x^q-y^q\|^2$ is small. The second condition holds easily. As shown in \Cref{eq:tmp2}, when $s_q=0$, the surviving part of the partial derivatives of the $i$-th variable would be proportional to $(x_i^q-y_i^q)$ and with opposite signs for the players (and thus the two players have aligned objectives). This forces $x_i^q \simeq y_i^q$ for all $i$ at equilibrium. 

The main challenge is the first condition: if $s_q=1$, we need $\|x_q-y_q\|^2$ to be large. For the ``correctly guessed'' indices $i$ (where $|\Delta_q+M_i|$ is small), the dominant term of the gradients comes from $H_q$. Thus, we have to design a specific gadget $H_q$ to enforce the condition. 
To bound $|\Delta_q|$, we have already constrained $H_q(x,y)=\sum_{i\in[n]}G_i^q(x,y)\cdot(y_i^q-x_i^q)$, so we must now design the $G_i^q$ functions such that $H_q$ has no equilibria when $x_i^q\simeq y_i^q$.
Sadly, by fixed-point arguments, a function $H_q(x,y)$ that is zero for $x^q=y^q$ but has no equilibria at $x^q=y^q$ does not exist.\footnote{We sketch an intuitive proof of this statement here.
Take any $x^q=y^q=z$. Then, since $H_q(z,z)=0$, we have $\partial_{x_i^q}H_q(z,z)=-\partial_{y_i^q}H_q(z,z):=S_i(z)$ for all $i\in[n]$. It follows that any fixed point of the map $z\mapsto\Pi_{[0,1]^n}(z+S(z))$, which are guaranteed to exist by Brouwer's theorem, satisfies the equilibrium conditions.}

To get around this obstacle, we use the main idea of our construction: we hide the symmetric equilibria in a \PPAD-hard problem encoded within the gadget $H_q$. 
If we cannot get a function without symmetric equilibria, we can use a function that has ``hard to find'' (approximately) symmetric equilibria.
More concretely, we allow for violations of the consistency condition, but any violations would lead to a solution of the \PPAD-hard problem hidden in $H_q$.

Remarkably, the desired gadget is the VIs to min-max equilibria reduction already used by \citet{daskalakis2021complexity, bernasconi2024roleconstraintscomplexityminmax}, where, to keep the reduction as simple as possible, we use linear VIs as \citet{bernasconi2024roleconstraintscomplexityminmax}.
However, here we do not have to explicitly add the constraint $x^q_i\simeq y^q_i$, since all the equilibria with $x^q_i$ far from $y^q_i$ satisfy the consistency constraint of \PURECIRC.
This establishes the central dichotomy of our proof. Any solution to our constructed GDA instance must either: (i) be consistent, thus satisfying the $\PURECIRC$ constraints and yielding a solution to $\PURECIRC$; or (ii) be inconsistent for at least one node (i.e., $x^q_i \simeq y^q_i$ even when $s_q=1$), which requires finding a symmetric equilibrium in $H_q$ and thus yields a solution to $\LINVI$. 
Notice that the \LINVI instance must be hidden into any variable $q$ and any copy $i$, and that the VI formulation requires each variable $x^q_i$ to be m-dimensional, where $m$ is the dimension of the \LINVI instance.
Since both \PURECIRC and \LINVI are \PPAD-hard, we conclude that the \GDAFP problem must also be \PPAD-hard.

\begin{remark} In the breakthrough result of \citet{fearnley2021complexity}, the reduction also generated unwanted solutions to a \PPAD-hard problem. In that case, however, the situation was structurally different and much deeper than in our setting. Indeed, they employed a \PLS-hard problem to hide the extra solutions, yielding a solution to either a \PPAD-hard problem or a \PLS-hard problem. In our case, we believe that the reasons for using a ``hiding gadget'' are more pragmatic than fundamental. Ultimately, we reduce from \EOTL, from which we then derive two problems (\PURECIRC and \LINVI).
\end{remark}

\section{Preliminaries}

For $x,y\in\Re^d$, we denote with $\langle x,y\rangle$ the standard Euclidean inner product, and the associated norm as $\|x\|=\sqrt{\langle x,x\rangle }$. The projections $\Pi_D$ on convex and closed sets $D$ are defined according to the $\ell_2$-norm.
We focus on Lipschitz and smooth functions. A continuous and differentiable function is $G$-Lipschitz on $[0,1]^d$ if $|f(x)-f(x')|\le G\|x-x'\|$ for all $x,x'\in [0,1]^d$ and $L$-smooth if $\|\nabla f(x)-\nabla f(x')\|\le L\|x-x'\|$. We denote with $\Re_+$ the set of positive real numbers.

We consider the following total search problem related to finding fixed points of gradient descent-ascent dynamics.

\begin{problem}
[\GDAFP]\label{def:gdafp}
Given $\epsilon$, $L$, $G, B$ $\in\Re_+$, two circuits implementing a $G$-Lipschitz and $L$-smooth function $f:[0,1]^d\times[0,1]^d\to [-B,B]$ and its gradient $\nabla f: [0,1]^d\times[0,1]^d\to\Re^{2d}$, find $(x^\star, y^\star)\in[0,1]^{2d}$ such that 
\[ 
   \frac{\partial f(x^\star,y^\star)}{\partial x_i} ( x_i-x_i^\star)  \le \epsilon \quad \forall i \in [d], x_i \in [0,1],
\]
and
\[ 
   -\frac{\partial f(x^\star,y^\star)}{\partial y_i} ( y_i-y_i^\star) \le \epsilon \quad \forall i \in [d], y_i \in [0,1].
\]
\end{problem}

\begin{remark}In our proof, it will be more convenient to work with the problem \GDAFP defined for individual components. However, the more usual definition requires to find $(x^\star,y^\star)\in[0,1]^{2d}$ such that $\langle \nabla_x f(x^\star,y^\star),x'-x^\star\rangle\le \epsilon$ and $-\langle\nabla_yf(x^\star,y^\star),y'-y^\star\rangle\le \epsilon$ for all $(x',y')\in[0,1]^{2d}$. These definitions are equivalent up to polynomial factors in the approximation.
Moreover, as proved in \citet{daskalakis2021complexity}, this problem is also equivalent to finding (approximate) fixed points of the map $(x,y)\mapsto \left(\Pi_{[0,1]^d}\left(x+\nabla_x f(x,y)\right),\Pi_{[0,1]^d}\left(y-\nabla_y f(x,y)\right)\right)$, justifying the name.
\end{remark}

We will reduce from the following \PPAD-complete problem.

\begin{problem}{(\EOTL)}
    Given two circuits $S,P:\{0,1\}^N\to\{0,1\}^N$ (called, successor and predecessor circuit, respectively), such that $S(0)\neq P(0)=0$, find a node $v\in\{0,1\}^N$ such that $P(S(v))\neq v$ or $S(P(v))\neq v\neq 0$.
\end{problem}

\EOTL is the prototypical problem in \PPAD, in the sense that \PPAD is the class of problems that are reducible to \EOTL. Conversely, a problem is \PPAD-hard if it is as hard as any other problem in \PPAD, i.e., there exists a polynomial-time reduction from \EOTL to that problem. The size of an \EOTL instance is the size of the representation of the predecessor and successor circuit.

We are going to reduce \EOTL to two different intermediate \PPAD-complete problems. The first one is \PURECIRC~\citep{deligkas2022pure}.

\begin{problem}{(\PURECIRC \citep{deligkas2022pure})}
An instance of \PURECIRC is given by a vertex set $V = [\kappa]$ and two sets of gates $\Gcal_{\NOR}$ and $\Gcal_{\PURIFY}$. Each gate is of the form $(u, v, w)$ where $u, v, w \in V$ are distinct nodes with the following interpretation:
\begin{itemize}
\item If $(u,v,w) \in \Gcal_{\NOR}$, then $u$ and $v$ are the inputs of the gate, and $w$ is its output.
\item If $(u,v,w) \in \Gcal_{\PURIFY}$, then $u$ is the input of the gate, and $v$ and $w$ are its outputs.
\end{itemize}

Each node is the output of exactly one gate.
A solution to an instance of \PURECIRC is an assignment $b : V \rightarrow \{0, 1, \bot\}$ that satisfies all the gates,
i.e., for each gate $(u, v, w) \in \Gcal$ we have:
\begin{itemize}
    \item if $(u, v, w) \in \Gcal_{\NOR}$, then $b$ satisfies
    \begin{align*}
        &b(u) = b(v) = 0 \implies b(w) = 1\\
        &(b(u) = 1) \text{  or  } (b(v) = 1) \implies b(w) = 0
    \end{align*}

    \item if $(u, v, w) \in \Gcal_{\PURIFY}$, then $b$ satisfies
    \begin{align*}
        &\{b(v), b(w)\} \cap \{0,1\} \neq \emptyset\\
        &b(u) \in \{0,1\} \implies b(v) = b(w) = b(u)
    \end{align*}
\end{itemize}
\end{problem}
\begin{theorem}[\citep{deligkas2022pure}]
\PURECIRC is \PPAD-complete.
\end{theorem}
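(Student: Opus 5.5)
This statement is imported from \citet{deligkas2022pure}; if I had to reprove it, I would prove membership and hardness separately.

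\textbf{Membership in \PPAD.} The plan is to turn a \PURECIRC instance into a piecewise-linear fixed-point problem on $[0,1]^{\kappa}$ with rational coefficients. Finding an exact rational fixed point of such a map is in \PPAD by the standard argument of Etessami and Yannakakis for linear \CFont{FIXP}.
\begin{itemize}
\item For a gate $(u,v,w)\in\Gcal_{\NOR}$, set $w\mapsto \max\{0,1-2(x_u+x_v)\}$. This gives $1$ when both inputs are $0$, and $0$ whenever some input equals $1$.
\item For a gate $(u,v,w)\in\Gcal_{\PURIFY}$, set $v\mapsto\min\{1,2x_u\}$ and $w\mapsto\max\{0,2x_u-1\}$. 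If $x_u\le 1/2$ then $w$ is pure $0$, and if $x_u\ge1/2$ then $v$ is pure $1$. If $x_u\in\{0,1\}$, both outputs equal $x_u$.
\end{itemize}
Since every node is the output of exactly one gate, these rules define a map $[0,1]^\kappa\to[0,1]^\kappa$. I would take an exact fixed point and decode $0\mapsto 0$, $1\mapsto1$, and $(0,1)\mapsto\bot$. The two bullets above show that the decoded assignment satisfies every gate.

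\textbf{\PPAD-hardness.} I would reduce from \EOTL through a discrete Brouwer instance, following Chen--Deng--Teng. First, embed the line graph of the \EOTL instance into a displacement function on a fine grid of $[0,1]^m$, so that approximate fixed points are located at ends of lines. The function is computed by a Boolean circuit that reads the bits of the grid cell containing the current point. Second, implement this circuit inside \PURECIRC. Boolean logic is easy, since NOT, OR and AND are all expressible with \NOR gates, and \NOR behaves correctly on pure inputs. The real work is converting the continuous coordinates $x_j$ into bits and converting the computed displacement back into the coordinates.

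\textbf{Main obstacle: garbage values.} Bit extraction is done with chains of \PURIFY gates, which act as comparators. Near a threshold these chains can output $\bot$, and $\bot$ then propagates through the circuit. I would handle this with the usual sampling trick. Evaluate the circuit at polynomially many slightly shifted copies of the point, chosen so that at most a few copies come near a threshold. Combine the outputs so that the few bad copies cannot change the aggregated displacement direction. This aggregation has to be done with pure-value gadgets, for example purify trees and majority-style \NOR constructions, and it must tolerate $\bot$ inputs, which is exactly what the \PURIFY guarantee $\{b(v),b(w)\}\cap\{0,1\}\neq\emptyset$ provides.

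Finally, feed the aggregated displacement back into the node values representing $x$ and close the loop. A satisfying assignment then forces the aggregated displacement to vanish, which happens only at a grid point corresponding to an \EOTL solution.

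The step I expect to be hardest is this robust aggregation. We must show that any consistent assignment, including the $\bot$ values it is allowed to contain, still certifies a genuine approximate Brouwer fixed point.
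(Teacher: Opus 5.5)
Your membership argument is correct: the three piecewise-linear gate maps send $[0,1]^\kappa$ to itself, and at an exact fixed point your decoding satisfies every \NOR{} and \PURIFY{} constraint, so Etessami--Yannakakis (exact fixed points of piecewise-linear maps are in \PPAD) finishes it. (The paper does not prove this theorem; it imports it from \citet{deligkas2022pure}.)

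The hardness half, however, has a genuine gap: what you describe is a reduction to a generalized circuit with arithmetic gates, not to \PURECIRC. A \PURECIRC instance has no real-valued nodes, no addition of constants and no comparison gates; every node carries only $0$, $1$ or $\bot$. So there are no ``continuous coordinates $x_j$'' to extract bits from. There is no way to form ``slightly shifted copies of the point''. And there is nothing to which a displacement could be added when you ``close the loop''. The Chen--Deng--Teng sampling trick is intrinsically arithmetic (shift by small multiples of the grid step, threshold, average the displacement vectors), and it cannot be assembled from \NOR{} and \PURIFY{}. Nor is \PURIFY{} a comparator. On input $\bot$ its outputs are arbitrary subject only to one of them being pure, e.g.\ $(0,1)$, $(1,0)$ or $(\bot,1)$, so a chain of them encodes no threshold or monotone order. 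Suppose you instead store the coordinates of an exponentially fine grid in binary. Then a $\bot$ in a high-order bit has no ``near a threshold'' meaning, and $\bot$ inputs to an adder may legitimately produce $\bot$ outputs everywhere. Nothing in your sketch rules out such a nearly all-$\bot$ consistent assignment.

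What is missing is a representation in which each coordinate needs only constant precision, so that it can be stored directly in one (or a constant number of) $\{0,1,\bot\}$ nodes. An example is a high-dimensional embedding of the \EOTL graph into a grid of constant side length. In such a construction \PURIFY{} is used to \emph{replicate}, not to compare. A tree of \PURIFY{} gates turns a $\bot$ coordinate into many copies of which at most one is $\bot$, since each $\bot$ input produces at most one $\bot$ output. The downstream logic must then be built so that both completions of an ambiguous coordinate give the same pure answer. The \PURIFY{} guarantee alone does not give this, since \NOR{} on $(\bot,0)$ may output anything. Finally, you need a decoding argument showing that every consistent assignment, $\bot$'s included, yields a genuine \EOTL solution. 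These are exactly the steps you label ``the real work'' and ``the hardest part'', and as proposed they cannot be carried out inside \PURECIRC.
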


The second problem we will use in our reduction is the \LINVI problem introduced by \citet{bernasconi2024roleconstraintscomplexityminmax}.

\begin{problem}[\LINVI\citep{bernasconi2024roleconstraintscomplexityminmax}]
    Given a matrix $D\in[-1,1]^{m\times m}$, a vector $c \in  [-1,1]^m$, and an approximation parameter $\epsilon>0$, find a point $z\in [0,1]^m$ such that:
    \[
    \langle Dz+c,z'-z\rangle \ge -\epsilon\quad\forall z'\in [0,1]^m.
    \]
\end{problem}

In our proof, it will be more convenient to work with the stronger component-wise version of \LINVI, which is proved to be hard by the following theorem.

\begin{theorem}[{\citep[Theorem~4.4]{bernasconi2024roleconstraintscomplexityminmax}}]\label{thm:ppadLinVI}
    There exists a constant $\rho$ such that it is \PPAD-complete to find a $z\in[0,1]^m$ such that 
    \[(Dz+c)_j\cdot (z_j'-z_j)\ge -\rho \quad \forall j \in [m], z'_j \in [0,1]. \]
\end{theorem}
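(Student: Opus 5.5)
We prove hardness by a direct reduction from \PURECIRC, which is \PPAD-complete by the theorem of \citet{deligkas2022pure}. For membership we rely on the standard argument that \LINVI, and hence its component-wise version, lies in \PPAD; it is a Brouwer-type fixed-point problem for the map $z\mapsto\Pi_{[0,1]^m}(z-(Dz+c))$.

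\textbf{How the component-wise condition behaves.} Write $F(z)=Dz+c$. Taking $z'_j=0$ and $z'_j=1$ in the condition gives:
\begin{itemize}
\item if $F_j(z)\ge\gamma$, then $z_j\le\rho/\gamma$;
\item if $F_j(z)\le-\gamma$, then $z_j\ge1-\rho/\gamma$.
\end{itemize}
So each coordinate acts as a threshold unit. It is close to $0$ when the linear form $F_j$ is clearly positive, and close to $1$ when $F_j$ is clearly negative. Fix a rounding threshold $\delta\in(0,1/12)$ (any small constant) and read off an assignment by
\begin{itemize}
\item $b(j)=0$ if $z_j\le\delta$,
\item $b(j)=1$ if $z_j\ge1-\delta$,
\item $b(j)=\bot$ otherwise.
\end{itemize}
All gates below have margin $\gamma$ a constant, so choosing $\rho\le\gamma\delta$ ensures clear margins force $\delta$-pure outputs.

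\textbf{The construction.} Take $m=\kappa$, one coordinate per node, and give each node $j$ the linear form determined by the gate that outputs $j$. Only the input coordinates appear, so every row of $D$ has at most two nonzero entries, and all coefficients lie in $[-1,1]$.
\begin{itemize}
\item For a \NOR gate $(u,v,w)$, set $F_w(z)=z_u+z_v-\tfrac12$.
\begin{itemize}
\item If $b(u)=b(v)=0$, then $F_w\le 2\delta-\tfrac12\le-\tfrac13$, so $b(w)=1$.
\item If $b(u)=1$ or $b(v)=1$, then $F_w\ge\tfrac12-\delta\ge\tfrac13$, so $b(w)=0$.
\end{itemize}
\item For a \PURIFY gate $(u,v,w)$, set $F_v(z)=\tfrac13-z_u$ and $F_w(z)=\tfrac23-z_u$, thresholds at $1/3$ and $2/3$.
\begin{itemize}
\item If $b(u)\in\{0,1\}$, then $z_u$ is within $\delta$ of $0$ or $1$. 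Both forms then have the same sign with margin at least $\tfrac13-\delta$, which gives $b(v)=b(w)=b(u)$.
\item If $b(u)=\bot$, either $z_u\le\tfrac12$ or $z_u\ge\tfrac12$. In the first case $F_w\ge\tfrac16$, so $b(w)=0$. In the second case $F_v\le-\tfrac16$, so $b(v)=1$. Either way at least one output is pure.
\end{itemize}
\end{itemize}
Taking $\gamma=1/6$ and $\rho=\delta/6$ makes every implication above valid. Hence any $z$ satisfying the component-wise condition rounds to a valid \PURECIRC assignment, and the reduction is clearly polynomial.

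\textbf{Main obstacle.} The only delicate point is the \PURIFY gate. Each node's value is a single threshold response to a linear form, so purify has to be realised by two different thresholds on the same input. The constants must be chosen so that the ``dead zones'' of width about $\rho/\gamma$ around each threshold never overlap. The split at $z_u=\tfrac12$ handles this, because $\tfrac12$ is at distance $\tfrac16$ from both thresholds. Once that is settled, the remaining work is only to check the constants and that all entries lie in $[-1,1]$.
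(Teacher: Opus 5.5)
Your reduction is correct. With $\gamma=1/6$ and $\rho=\delta/6$, every margin you invoke is at least $1/6$. All entries of $D$ and $c$ lie in $[-1,1]$, because the three nodes of a gate are distinct, and the rounding is well defined since $\delta<1/2$.

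The paper does not prove this statement. It imports it as a black box from \citet[Theorem~4.4]{bernasconi2024roleconstraintscomplexityminmax}. Your route is therefore genuinely different: a self-contained, direct reduction from \PURECIRC. It rests on the observation that the component-wise condition makes each coordinate a robust threshold unit, an approximate best response to the linear form $(Dz+c)_j$. Each \NOR{} or \PURIFY{} output can then be realized by a single affine form in its inputs.

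What your approach buys:
\begin{itemize}
\item an explicit constant (any $\rho<1/72$ works);
\item a very sparse $D$, with at most two nonzeros per row;
\item more interestingly for this paper, it carries out exactly the alternative mentioned at the start of the construction section, building the \LINVI instance from \PURECIRC. The whole \GDAFP hardness could then rest on \PURECIRC alone.
\end{itemize}
What the citation buys is brevity and a fully worked membership argument.

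Your membership sketch is acceptable but should state the precision loss. An $\eta$-approximate fixed point (in $\ell_\infty$) of $z\mapsto\Pi_{[0,1]^m}(z-(Dz+c))$ only guarantees a component-wise $(m+1)\eta$-solution, since $|(Dz+c)_j|\le m+1$. So you take $\eta=\rho/(m+1)$. This is harmless because the map is $O(m)$-Lipschitz.
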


\section{Construction}

We start from an instance of \EOTL. Then, we reduce it to an instance of \LINVI and an instance of \PURECIRC. 
We remark that there is no structural reason to reduce from \EOTL, and we could 
Equivalently, we could reduce directly either from \LINVI or \PURECIRC (and, for instance, in the latter case, construct an instance of \LINVI from one of \PURECIRC), but for clarity, we decided to reduce from the prototypical \PPAD-complete problem \EOTL.

We recall that we let $\kappa$ be the number of vertices of the \PURECIRC instance, $m$ the number of variables of the \LINVI instance, and $\rho$ its approximation. Then, we define the following parameters:

\[
n=\frac{2^{64}m^{14}\kappa^2}{\rho^8}, \quad\epsilon = \frac{\rho^{18}}{2^{140}m^{28}\kappa^4}\quad\text{and}\quad\delta=\frac{\rho^2}{2^{10}m^2}.
\]

Based on the \LINVI and \PURECIRC instances, we build an instance of $\GDAFP$ as follows. We let $M_i=\delta(-n/2+ i)$ for each $i \in [n]$.
We set $d=mn\kappa$, and define the set of variables of the \GDAFP instance as $x^v_{i,j}$ for each $v \in V$, $i \in [n]$, $j \in [m]$. 
For convenience, we denote with $x^v_{i}$ the vector of variables $(x^v_{i,j})_{j \in [m]}$ and by $x^v$ the vector of variables $(x^v_{i,j})_{i\in[n],j\in[m]}$. 
We define 
\[H_v(x,y)=\sum_{i\in [n]} \langle D x^v_i+c,y^v_{i}-x^v_{i}\rangle,\]
where $D$ and $c$ are the $m\times m$-dimensional matrix and $m$-dimensional vector in the \LINVI instance, respectively.

\subsection{Implementing the Gates}
To define the utility function, we will use the following continuous and smooth functions.

We define a function $g:\Re\rightarrow[0,1]$ such that:
\begin{itemize}
\item $g(z)=1$ if $z \le 1/4$
\item $g(z)=0$ if $z\ge 1/2$
\item $g(z)=128(z-1/4)^3-48(z-1/4)^2+1$ if $z\in (1/4,1/2)$.
\end{itemize}
Intuitively, this function can be used to determine the output of a $\NOR$ gate. In particular, given two variables $x,y \in [0,1]$, $g(x+y)$ is $0$, i.e., false, if at least one variable is $1$, while $g(x+y)$ is $1$, i.e., true, if both variables are $0$.

Then, we define the function $\ell:\Re\rightarrow[0,1]$ related to $\PURIFY$ gates as follows:

\begin{itemize}
\item $\ell(z)=0$ if $z\le 5/12$
\item $\ell(z)=1$ if $z\ge 7/12$
\item $\ell(z)=144(z-5/12)^2(2-3z)$ if $z\in (5/12,7/12)$.
\end{itemize}
Given a value $z \in [0,1]$, recall that we interpret every value $z\in (0,1)$ as $\bot$.
Then, considering the functions $\ell(z+ 1/4)$ and $\ell(z-1/4)$, we get that:
\begin{enumerate}[label={\roman*)}]
\item If $z=1$, both $\ell(z+ 1/4)$ and $\ell(z-1/4)$ are $1$
\item If $z=0$, both $\ell(z+ 1/4)$ and $\ell(z-1/4)$ are $0$
\item If $z \in (0,1)$, at least one among $\ell(z+ 1/4)$ and $\ell(z-1/4)$ is in $\{0,1\}$.    
\end{enumerate}

Moreover, we need the following threshold function $\lambda:\Re\rightarrow[0,1]$, whose main property is to be 1 if the input is at least $3m+1$ and $0$ if it is at most $3m$, where we recall that $m$ is the number of variables of the \LINVI instance.

\begin{itemize}
\item $\lambda(z)=0$ if $z\le 3m$
\item $\lambda(z)=1$ if $z\ge 3m+1$
\item {$\lambda(z)=-2(z-3m)^3+3(z-3m)^2$ if $z\in (3m,3m+1)$ }
\end{itemize}
Intuitively, the function $\lambda$ will be used to build the assignment to \PURECIRC. Given a value $z$, the function $\lambda(z)$ assigns value $1$ if $z\ge 3m+1$,  $0$ if $z\le 3m$, and $\bot$ if $z \in (3m,3m+1)$.

Finally, we remark that all the functions are continuous and are $O(1)$-Lipschitz and smooth. 
In the following, we will we use the explicit bounds $\sup_{z\in\Re}|g'(z)|=6$,  $\sup_{z\in\Re}|\ell'(z)|=9$, and $\sup_{m\in\Naturals,z\in\Re}|\lambda'(z)|=3/2$. The three functions are illustrated in \Cref{fig:smoothfnct}.
\begin{figure}
    \centering
    \begin{tikzpicture}
\pgfplotsset{my style/.append style={line width=1.5pt}}
\begin{axis}[
            xtick={-1,0,1/4,1/2,1},
            xticklabels={,$0$,$1/4$,$1/2$,,},
            ytick={0,1},
            yticklabels={$0$,$1$},
            xmin=-0,
            xmax=0.75,
            ymin=-0.1,
            ymax=1.2,
            axis x line=middle, 
            axis y line=middle,
            xlabel={$z$}, 
            x label style={anchor=north},
            ylabel={$g(z)$},
            y label style={anchor=south west},
            width=1.2*144pt,
            height=1.2*89pt,
            axis line style={-Latex[round]}
            ]
\addplot[my style, domain=-7:1/4] {1};
\addplot[my style, domain=1/4:1/2] {128*(x-1/4)^3-48*(x-1/4)^2+1};
\addplot[my style, domain=1/2:0.7] {0};
\end{axis}
\end{tikzpicture}\hfill\begin{tikzpicture}
\pgfplotsset{my style/.append style={line width=1.5pt}}
\begin{axis}[
            xtick={-1,0,5/12,1/2,7/12,1},
            xticklabels={,$0$,$5/12$,,$7/12$,},
            ytick={0,1},
            yticklabels={$0$,$1$},
            xmin=5/12-0.15,
            xmax=7/12+0.15,
            ymin=-0.1,
            ymax=1.2,
            axis x line=middle, 
            axis y line=middle,
            xlabel={$z$}, 
            x label style={anchor=north},
            ylabel={$\ell(z)$},
            y label style={anchor=south west},
            width=1.2*144pt,
            height=1.2*89pt,
            axis line style={-Latex[round]}
            ]
\addplot[my style, domain=-7:5/12] {0};
\addplot[my style, domain=5/12:7/12] {144*(x-5/12)^2*(2-3*x)};
\addplot[my style, domain=7/12:7] {1};
\end{axis}
\end{tikzpicture}\hfill\begin{tikzpicture}
\pgfplotsset{my style/.append style={line width=1.5pt}}
\begin{axis}[
            xtick={0,3,4,5},
            xticklabels={,$3m$,$3m+1$,},
            ytick={0,1},
            yticklabels={$0$,$1$},
            xmin=2.5,
            xmax=4.5,
            ymin=-0.1,
            ymax=1.2,
            axis x line=middle, 
            axis y line=middle,
            xlabel={$z$},
            x label style={anchor=north},
            ylabel={$\lambda(z)$},
            y label style={anchor=south west},
            width=1.2*144pt,
            height=1.2*89pt,
            axis line style={-Latex[round]}
            ]
\addplot[my style, domain=-7:3] {0};
\addplot[my style, domain=3:4] {-2*(x-3)^3+3*(x-3)^2};
\addplot[my style, domain=4:7] {1};
\end{axis}
\end{tikzpicture}
    \caption{Illustration of the smooth functions $g$, $\ell$ and $\lambda$.}
    \label{fig:smoothfnct}
\end{figure}
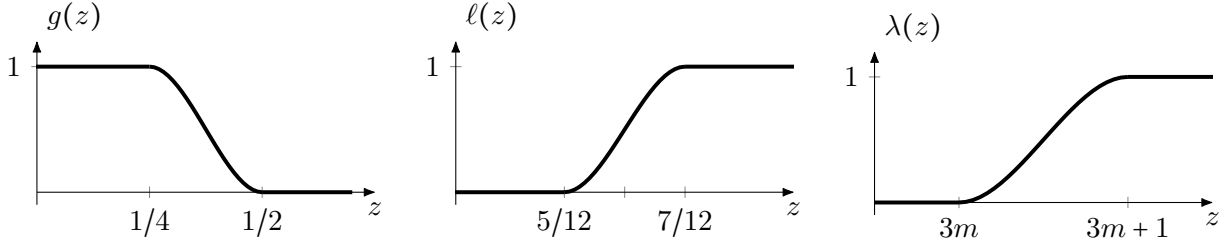

\subsection{Utility Function}
The utility function is composed of three different components.
\begin{itemize}
\item The first one is related to $\NOR$ gates

\begin{align*}
f_1(x,y)=\sum_{\substack{(u,v,w)\in\Gcal_{\NOR}}}  g\left(\lambda(\|x^u-y^u\|^2)+ \lambda(\|x^v-y^v\|^2)\right)\cdot H_w(x,y).
\end{align*}

\item The second one is related to $\PURIFY$ gates

\begin{align*}
f_2(x,y)&= \hspace{-0.2cm}\sum_{(u,v,w)\in\Gcal_{\PURIFY}} \hspace{-0.5cm}\Big[ \ell\Big(\lambda(\|x^u-y^u\|^2) + \frac{1}{4}\Big) \cdot H_v(x,y) 
+ \ell\Big(\lambda(\|x^u-y^u\|^2)- \frac{1}{4}\Big) \cdot H_w(x,y)  \Big].
\end{align*}

\item The last component is a (weighted) regularizer
\begin{align*}
    &\varphi(x,y)=\sum_{v \in V} \sum_{i \in [n]} M_i \lVert x^v_{i}-y^v_{i}\rVert^2.
\end{align*}
Note that the weights $M_i$ can be both positive and negative.
\end{itemize}

Finally, the utility function of \GDA is the sum of the three components
\begin{equation}\label{eq:utilityfunction}
f(x,y)=f_1(x,y)+f_2(x,y)+\varphi(x,y).
\end{equation}

It is easy to see that $f$ is smooth. Moreover, all the parameters of the \GDA instance---specifically, the Lipschitz constant $G$, the range $B$, and the smoothness $L$---are polynomially bounded by the size of the \PURECIRC and \LINVI instances, and thus also by the size of the original \EOTL instance.
Finally, the approximation $\epsilon$ is inversely polynomial in the size of the \EOTL instance.

The intuition behind the construction is the following. 
Given a solution $(x,y)$ to \GDAFP, we build many possible solutions to \LINVI and \PURECIRC. 
First, we check if any of the $x^v_i$, $v \in V$, $i \in [n]$ are solutions to \LINVI. 
If none of them is a solution to \LINVI, then we build an assignment $b(\cdot)$ to \PURECIRC as:

\begin{subequations}\label{eq:interpretation}
\begin{align}
 &\lambda(\lVert x^v-y^v\rVert^2)=1 \implies b(v)=1 \\
 &\lambda(\lVert x^v-y^v\rVert^2)=0 \implies b(v)=0 \\
 &\lambda(\lVert x^v-y^v\rVert^2)\in (0,1) \implies b(v)=\bot
\end{align}
\end{subequations}

Our construction will not guarantee that this assignment is feasible for the \PURECIRC instance; however, it will be so if none of the $\{x_i^v\}_{v\in V, i\in[n]}$ is a solution to \LINVI.

\section{Decomposing the Gradient}

In this section, we compute the partial derivatives of the utility function $f(x,y)$ defined in Equation \eqref{eq:utilityfunction}. This requires first defining two crucial quantities, $\Delta_q(x,y)$ and $s_q(x,y)$, which are helpful to decompose the influence of the circuit structure on the derivatives.

For a given vertex $q\in V$, we define a quantity $\Delta_q(x,y)$ related to the components of the derivative due to gates of which $q$ is an input.
Formally, given any point $(x,y)\in[0,1]^{2d}$, we let
    \begin{align*}
        \Delta_q(x,y)&=\sum_{(q,v,w)\in\Gcal_{\NOR}} g'(\lambda(\|x^q-y^q\|^2)+\lambda(\|x^v-y^v\|^2))\lambda'(\|x^q-y^q\|^2)H_w(x,y)\\
        &+\sum_{(u,q,w)\in\Gcal_{\NOR}} g'(\lambda(\|x^u-y^u\|^2)+\lambda(\|x^q-y^q\|^2))\lambda'(\|x^q-y^q\|^2)H_w(x,y)\\
        &+\sum_{(q,v,w)\in\Gcal_{\PURIFY}} \ell'(\lambda(\|x^q-y^q\|^2)+1/4)\lambda'(\|x^q-y^q\|^2) H_v(x,y)\\
        &+\sum_{(q,v,w)\in\Gcal_{\PURIFY}} \ell'(\lambda(\|x^q-y^q\|^2)-1/4)\lambda'(\|x^q-y^q\|^2) H_w(x,y).
    \end{align*}

Moreover, given a vertex $q$, we define a quantity $s_q(x,y)$.
$s_q$ represents the desired value of the gate whose output is $q$ in a solution to \PURECIRC. Since each node is the output of exactly one gate, only one term in the sum below will be non-zero for any given $q$:
\begin{align*}
    s_q(x,y)&=\sum_{(u,v,q)\in\Gcal_{\NOR}}g(\lambda(\|x^u-y^u\|^2)+\lambda(\|x^v-y^v\|^2))\\
    &+\sum_{(u,q,w)\in \Gcal_{\PURIFY}}\ell(\lambda(\|x^u-y^u\|^2)+1/4)\\
    &+\sum_{(u,v,q)\in \Gcal_{\PURIFY}}\ell(\lambda(\|x^u-y^u\|^2)-1/4).
\end{align*}
For instance, if $q$ is the output of a gate $(u,v,q)\in \Gcal_{\NOR}$, then $s_q$ is the output of the $\NOR$ gate following the interpretation given in \Cref{eq:interpretation}.

The definitions of $\Delta_q$ and $s_q$ allow us to compactly express the gradient of the full utility function $f(x,y)$. The proof of the following lemma is deferred to \Cref{app:grad}.

\begin{restatable}{lemma}{lmGradComp}
The partial derivatives of $f$ (defined as per \Cref{eq:utilityfunction}) are
\begin{subequations}\label{eq:partialDerivatives}
\begin{align}
 \frac{\partial f(x,y)}{\partial {x^{q}_{i,j}}}&=  s_q(x,y) \left[ (D^\top (y^q_i-x^q_i))_j -(D x^q_i+c)_j\right] + 2(M_i+\Delta_q(x,y)) (x^q_{i,j}-y^q_{i,j}),  \\
 \frac{\partial f(x,y)}{\partial {y^{q}_{i,j}}}&=  s_q(x,y) (D x^q_i+c)_j - 2(M_i+\Delta_q(x,y)) (x^q_{i,j}-y^q_{i,j}).
\end{align}
\end{subequations}
\end{restatable}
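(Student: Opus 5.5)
The plan is a direct computation: differentiate each summand of $f=f_1+f_2+\varphi$ with respect to a fixed variable $x^q_{i,j}$ (resp.\ $y^q_{i,j}$) and sort the contributions by where the variables of node $q$ appear. Each summand of $f_1$ and $f_2$ has the form $\sigma(\cdot)\, H_w(x,y)$. Here $\sigma$ is $g$ or $\ell(\cdot\pm 1/4)$ composed with $\lambda(\|x^u-y^u\|^2)$ for input nodes $u$, and $w$ is the output node. The variables $(x^q,y^q)$ can therefore enter a summand in only two ways. They enter through $H_w$ when $w=q$; since every node is the output of exactly one gate, exactly one summand, or one of the two terms of a \PURIFY summand, contributes in this way. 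They enter through the prefactor when $q$ is an input of the gate. Since gates have distinct nodes, these two cases never occur in the same summand, so the product rule splits cleanly.

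\textbf{Term (A), $w=q$.} The prefactor is exactly the summand of $s_q(x,y)$ matching the gate that outputs $q$, so the contribution is $s_q\,\partial H_q$. Write $H_q=\sum_{i'}\langle Dx^q_{i'}+c,\,y^q_{i'}-x^q_{i'}\rangle$. Only the $i'=i$ term depends on $x^q_{i,j}$. Differentiating $\langle Dx+c,y-x\rangle$ in $x_j$ gives $(D^\top(y-x))_j-(Dx+c)_j$, and differentiating it in $y_j$ gives $(Dx+c)_j$. This yields the first bracket in each of the two formulas.

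\textbf{Term (B), $q$ an input.} Set $r=\|x^q-y^q\|^2$. Then $\partial r/\partial x^q_{i,j}=2(x^q_{i,j}-y^q_{i,j})=-\partial r/\partial y^q_{i,j}$. By the chain rule, the summand $g(\lambda(r)+\lambda(\|x^v-y^v\|^2))H_w$ has $x^q_{i,j}$-derivative
\[
g'(\lambda(r)+\lambda(\|x^v-y^v\|^2))\,\lambda'(r)\,H_w\cdot 2(x^q_{i,j}-y^q_{i,j}).
\]
The same holds when $q$ is the second input of a \NOR gate, and for the two \PURIFY terms with $\ell'(\lambda(r)\pm1/4)$. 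Summing these over all gates having $q$ as an input reproduces exactly the four sums defining $\Delta_q$. The total contribution is $2\Delta_q(x^q_{i,j}-y^q_{i,j})$, with the opposite sign for $y^q_{i,j}$.

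\textbf{Regularizer.} The derivative of $\varphi$ is $\partial_{x^q_{i,j}}\varphi=2M_i(x^q_{i,j}-y^q_{i,j})$, with the negated value for $y^q_{i,j}$. Adding the three pieces gives both identities.

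The only delicate step is the bookkeeping in Term (B). One must check that no summand has $q$ both as an input and as the output, which holds because $u,v,w$ are distinct. One must also check that the matching of gate-type and input-position cases with the sums in $\Delta_q$ and $s_q$ is exact, including the convention that a \PURIFY gate $(u,v,w)$ contributes $\ell(\cdot+1/4)$ to $s_v$ and $\ell(\cdot-1/4)$ to $s_w$. Otherwise the calculation is routine.
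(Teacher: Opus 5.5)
Your proposal is correct and follows the paper's route: a direct chain-rule computation of the partial derivatives of $f_1$, $f_2$ and $\varphi$. The terms where $q$ is a gate input collect into $\Delta_q$, and the single term where $q$ is the output gives $s_q$ times the derivative of $H_q$; your explicit note that distinctness of $u,v,w$ keeps these two cases in separate summands is something the paper uses only implicitly.
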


Intuitively, $s_q(x,y)=1$ ``activates'' the \LINVI part of the gradient, breaking the symmetry between the pseudo-gradients of $x$ and $y$. 
This will force either $x_i^q$ to be an approximate solution to \LINVI for at least one copy $i$, or it will force the distance $\lVert x^q-y^q \rVert^2$ to be large, satisfying the consistency required by \PURECIRC.

The term $\Delta_q(x,y)$ is an undesired byproduct in the gradient of $f(x,y)$. Indeed, this depends on the gates whose inputs include $q$.
Ideally, such gates should not affect the values of the variables $ x^q$ and $y^q$.
The purpose of the weighted regularizer $\varphi$, containing the terms $M_i$, is to ensure that for at least some indices $i$, $M_i+ \Delta_q\simeq 0$, making the impact of $\Delta_q$ on the gradient negligible.

\begin{remark}
The utility function $f(x,y)$ can be expressed as $f(x,y)=\sum_{q\in V}s_q(x,y)H_q(x,y)+\varphi(x,y)$, as anticipated in \Cref{sec:overview}.
Note that the interpretation of $s_q$ is meaningful since each node is the output of exactly one gate.
\end{remark}

\section{\PPAD-Hardness of \GDAFP}

The goal of this section is to show that for any given solution to the $\GDAFP$ instance, we can recover a solution to one of the two $\PPAD$-complete problems: $\PURECIRC$ or $\LINVI$. This implies the $\PPAD$-hardness of $\GDAFP$, as a solution to either problem can be mapped back to a solution for the original $\EOTL$ instance.

The main technical challenge is proving consistency: showing that if no solution $x^v_i$, for $v \in V$, $i \in [n]$,
solves \LINVI, then all the values $s_q$ are consistent with the values $\lambda(\lVert x^q-y^q\rVert^2)$.
Formally, we prove the following dichotomy. \Cref{fig:placeholder} gives a sketch of the argument.

\begin{figure}[t]
    \centering
    \tikzset{
  >=Latex[round],
  every picture/.style={line width=0.85pt,line cap=round,line join=round},
  every node/.style={font=\small},
}

\begin{tikzpicture}[x=0.75pt,y=0.75pt,yscale=-1,xscale=1]

\draw  [->, dash pattern={on 4.5pt off 4.5pt}]  (235,60) .. controls (235.9,17.83) and (149.77,21.74) .. (100.74,69.28) ;
\draw  [->, dash pattern={on 4.5pt off 4.5pt}]  (235,100) .. controls (234.91,140.91) and (149.76,139.79) .. (100.74,90.74) ;
\draw  [fill=gray!7,fill opacity=1] (30,75.41) .. controls (30,72.42) and (32.42,70) .. (35.41,70) -- (134.59,70) .. controls (137.58,70) and (140,72.42) .. (140,75.41) -- (140,84.59) .. controls (140,87.58) and (137.58,90) .. (134.59,90) -- (35.41,90) .. controls (32.42,90) and (30,87.58) .. (30,84.59) -- cycle ;
\draw  [fill=gray!7,fill opacity=1 ] (190,94) .. controls (190,91.79) and (191.79,90) .. (194,90) -- (276,90) .. controls (278.21,90) and (280,91.79) .. (280,94) -- (280,106) .. controls (280,108.21) and (278.21,110) .. (276,110) -- (194,110) .. controls (191.79,110) and (190,108.21) .. (190,106) -- cycle ;
\draw  [fill=gray!7,fill opacity=1 ] (210,54) .. controls (210,51.79) and (211.79,50) .. (214,50) -- (256,50) .. controls (258.21,50) and (260,51.79) .. (260,54) -- (260,66) .. controls (260,68.21) and (258.21,70) .. (256,70) -- (214,70) .. controls (211.79,70) and (210,68.21) .. (210,66) -- cycle ;
\draw  [->]  (260,60) -- (299.84,59.85) -- (299.67,80) -- (328,80) ;
\draw  [->, dash pattern={on 4.5pt off 4.5pt}]  (480,80) .. controls (480.07,20.66) and (379.1,20.22) .. (360,20) .. controls (341.19,19.78) and (273.96,21.72) .. (241.46,48.75) ;
\draw  [->, dash pattern={on 4.5pt off 4.5pt}]  (480,80) .. controls (480.07,139.46) and (375.88,134.53) .. (358.89,135.65) .. controls (342.16,136.76) and (269.14,136.78) .. (241.24,111.19) ;
\draw  [->]  (140,80) -- (160,80) -- (160,59.86) -- (208,59.99) ;
\draw  [->]  (160,80) -- (160,100.26) -- (188,100.02) ;
\draw    (280,100) -- (300,100) -- (299.67,80) ;

\draw (83,79.5) node  [font=\scriptsize] [align=left] {\EOTL};
\draw (235,60) node  [font=\scriptsize] [align=left] {\LINVI};
\draw (235,100) node  [font=\scriptsize] [align=left] {\PURECIRC};
\draw  [fill=gray!5,fill opacity=1 ]  (330.5,54.5) .. controls (330.5,51.74) and (332.74,49.5) .. (335.5,49.5) -- (626.5,49.5) .. controls (629.26,49.5) and (631.5,51.74) .. (631.5,54.5) -- (631.5,106.5) .. controls (631.5,109.26) and (629.26,111.5) .. (626.5,111.5) -- (335.5,111.5) .. controls (332.74,111.5) and (330.5,109.26) .. (330.5,106.5) -- cycle  ;
\draw (481,80.5) node  [font=\scriptsize] [align=left] {\begin{minipage}[lt]{220pt}\setlength\topsep{0pt}
\begin{center}
\GDA
\end{center}
\textbf{Decoding} (\Cref{lm:main})\\
A: If $\exists i,q$ such that $x^q_i$ solves \LINVI: output it\\
B: Else decode $b$ via Eq.\eqref{eq:purecirc} and output \PURECIRC solution
\end{minipage}};
\draw  [draw opacity=0][fill=gray!0,fill opacity=1 ]  (331.5,8) -- (388.5,8) -- (388.5,32) -- (331.5,32) -- cycle  ;
\draw (360,20) node    {Case A};
\draw  [draw opacity=0][fill=gray!0,fill opacity=1 ]  (331.39,123.65) -- (386.39,123.65) -- (386.39,147.65) -- (331.39,147.65) -- cycle  ;
\draw (358.89,135.65) node    {Case B};

\end{tikzpicture}
    \caption{Given an instance $I_{\textnormal EOTL}$ of \EOTL we build an instance $I_{PC}$ of \PURECIRC and $I_{\textnormal{VI}}$ of \LINVI that we combine in an instance $I_{\textnormal GDA}$ of \GDA. From a solution $(x,y)$ of \GDA, we check if any of the $x_i^q$ are a solution to \LINVI; if not, we build an assignment $b:V\to\{0,1,\bot\}$ to \PURECIRC according to \Cref{eq:purecirc}. \Cref{lm:main} assures that either one of the $x_i^q$ is a solution to the \LINVI instance $I_{\textnormal VI}$ (Case A) or that $b$ is a valid assignment to the instance $I_{\textnormal PC}$ of \PURECIRC (Case B). In either case, we can then build a solution to the original \EOTL instance $I_{\textnormal EOTL}$.}
    \label{fig:placeholder}
\end{figure}

\begin{lemma}[Dichotomy Lemma] \label{lm:main}
    Consider a solution $(x,y)\in[0,1]^{2d}$ to the $\GDAFP$ instance. At least one of the two following conditions holds:
    \begin{itemize}
        \item There exists a $q\in V$ and $i \in [n]$ such that $x^q_i$ is a solution to \LINVI
        \item For all vertices $q \in V$, it holds
       \begin{align*}
        &s_{q}(x,y)= 1\implies \lambda(\|x^q-y^q\|^2)=1\\
        &s_{q}(x,y)= 0\implies \lambda(\|x^q-y^q\|^2)=0.
        \end{align*}
    \end{itemize}
\end{lemma}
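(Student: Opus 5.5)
The plan is to analyse each vertex $q$ and each copy $i$ separately, using the explicit partial derivatives of the preceding lemma (\Cref{eq:partialDerivatives}). Write $K_i^q:=M_i+\Delta_q(x,y)$ and $a_{i,j}:=x^q_{i,j}-y^q_{i,j}$. Recall that in \GDAFP the $x$ player ascends $f$ and the $y$ player descends it. So in \Cref{eq:partialDerivatives} the regularizer contributes $2K_i^q a_{i,j}$ to the ascent direction of $x^q_{i,j}$ and $+2K_i^q a_{i,j}$ to the descent direction of $y^q_{i,j}$. Both coordinates are therefore pushed the \emph{same} way. This is the "aligned objectives" phenomenon of the overview, and it is what makes $a_{i,j}$ small whenever $|K_i^q|$ is large.

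\textbf{Step 1 (uniform bound on $\|x^q-y^q\|_1$ and on $\Delta_q$).} The \LINVI part of the derivatives is bounded in absolute value by $C=O(m)$, since $D\in[-1,1]^{m\times m}$, $c\in[-1,1]^m$ and $s_q\in[0,1]$. Suppose $|2K_i^q a_{i,j}|>C+\sqrt{\epsilon}$. Then both $x^q_{i,j}$ and $y^q_{i,j}$ have a derivative of the same sign and of magnitude $\Omega(\sqrt\epsilon)$. The $\epsilon$-conditions force both to lie within $O(\sqrt\epsilon)$ of the same endpoint of $[0,1]$, so $|a_{i,j}|=O(\sqrt\epsilon)$. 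This contradicts the assumption for tiny $\epsilon$. Hence $|a_{i,j}|\le \min\{1,O(C/|K_i^q|)\}$ for every $i,j$.

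Since the $M_i$ form a grid of step $\delta$, summing over $i$ gives a harmonic series:
\[
\|x^q-y^q\|_1=O\big(m(1+(C/\delta)\log n)\big),
\]
and this holds \emph{whatever the value of $\Delta_q$}, so there is no circularity. Because $H_v$ is linear in $y^v-x^v$ with $O(m)$-bounded coefficients, we get $|H_v|=O(m^2 (C/\delta)\log n)$. Plugging this into the definition of $\Delta_q$, with the stated bounds $|g'|\le 6$, $|\ell'|\le 9$, $|\lambda'|\le 3/2$ and at most $O(\kappa)$ gates, yields $|\Delta_q|=\poly(m,\kappa,1/\rho)\log n$. By the choice of $n$ this is at most $\delta n/4$, so $-\Delta_q$ lies well inside the grid $\{M_i\}$.

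\textbf{Step 2 (case $s_q=0$).} Now the derivatives are exactly $\pm 2K_i^q a_{i,j}$. If $K_i^q a_{i,j}\neq 0$, both players push in the same direction. The $\epsilon$-conditions give $|K_i^q a_{i,j}|\cdot|a_{i,j}|\le O(\epsilon)$, hence $a_{i,j}^2\le O(\epsilon/|K_i^q|)$. Since the grid has step $\delta$, at most one index $i$ has $|K_i^q|<\delta/2$, and that copy contributes at most $m$ to $\|x^q-y^q\|^2$. The remaining copies contribute at most $O(m\epsilon/\delta)\cdot\log n\ll m$. Hence $\|x^q-y^q\|^2\le 3m$, i.e.\ $\lambda(\|x^q-y^q\|^2)=0$. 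This explains the threshold $3m$ in $\lambda$.

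\textbf{Step 3 (case $s_q=1$; the main obstacle).} Suppose $\lambda(\|x^q-y^q\|^2)<1$, i.e.\ $\|x^q-y^q\|^2<3m+1$. By Step 1, the number of indices $i$ with $|K_i^q|\le\tau$ is about $2\tau/\delta$. By averaging, at most $(3m+1)/\theta$ copies have $\|a_i\|^2\ge\theta$. I will take $\theta=\Theta(\rho^2/m^2)$ and $\tau=\Theta(m)$, and check that $\delta=\rho^2/(2^{10}m^2)$ makes $2\tau/\delta>(3m+1)/\theta$. Then some copy $i$ has both $|K_i^q|\le\tau$ and $\|a_i\|\le\sqrt\theta$.

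For this copy I will use only the $y$ player's condition, $(Dx^q_i+c)_j(y'_j-y^q_{i,j})\ge -\epsilon-2|K_i^q||a_{i,j}|$. Transferring it from $y^q_i$ to $x^q_i$ costs at most $(m+1)|a_{i,j}|$. Hence
\[
(Dx^q_i+c)_j(z'_j-x^q_{i,j})\ge-\epsilon-O(m)\sqrt\theta\ge-\rho ,
\]
so $x^q_i$ solves the component-wise \LINVI of \Cref{thm:ppadLinVI}, which is the first alternative of the lemma. The delicate part is the bookkeeping of constants: the gap between $\tau$ and $\theta$ must be absorbed while still having enough near-zero-$K$ copies. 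I would first fix $\theta$ from the required accuracy $\rho$, then choose $\tau$ as large as the \LINVI error allows, i.e.\ $2\tau\sqrt\theta\le\rho/4$. The final thing to verify is that the given $\delta$ and $n$ satisfy both the counting inequality and the bound $|\Delta_q|\le\delta n/4$ from Step 1.
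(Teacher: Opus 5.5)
Your proposal is correct and follows essentially the paper's route. The steps line up: a per-copy bound on $|x^q_{i,j}-y^q_{i,j}|$ in terms of $|M_i+\Delta_q|$, a harmonic-sum bound on $\|x^q-y^q\|_1$ giving $|\Delta_q|\le\delta n/4$, the same argument for $s_q=0$, and for $s_q=1$ a pigeonhole over well-guessed copies. That last step is the contrapositive of the paper's argument (every copy with $|M_i+\Delta_q|\le 1$ either solves \LINVI or has $\|x_i^q-y_i^q\|_1\ge\rho/10$), here with a window $|K_i^q|\le\Theta(m)$ and the $y$-player's condition instead of the $x$-player's.

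One small fix is needed in Step~1. The case $|2K_i^q a_{i,j}|>C+\sqrt{\epsilon}$ only yields $|a_{i,j}|<\sqrt{\epsilon}$, not a contradiction: a priori $|K_i^q|$ is only known to be $O(\kappa m^2 n)$, while $n\sqrt{\epsilon}=\rho/64$. So the bound should read $|a_{i,j}|\le\max\{(C+\sqrt{\epsilon})/(2|K_i^q|),\sqrt{\epsilon}\}$. Its extra contribution $nm\sqrt{\epsilon}=m\rho/64$ to $\|x^q-y^q\|_1$ is harmless, exactly like the paper's $\sqrt{\epsilon/|M_i+\Delta_q|}$ term.
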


The previous lemma (whose proof is deferred to \Cref{sec:proofmainlemma}), combined with the definition of $s_q(x,y)$, concludes the proof. In particular, we combine the following results: the functions $g$ and $\ell$ correctly implement the $\NOR$ and $\PURIFY$ gates,
the properties of $s_q(x,y)$, and the interpretation of the \GDA solution in \Cref{eq:interpretation}.

\begin{theorem}
     \GDAFP is \PPAD-hard even for $1/\epsilon$, $L$, $G$ and $B$ polynomial in $d$. 
\end{theorem}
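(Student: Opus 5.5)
The theorem follows by combining the Dichotomy Lemma (\Cref{lm:main}) with the gate semantics of $g$ and $\ell$. The plan has four steps.

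\textbf{Step 1: the reduction.} Starting from an \EOTL instance, I produce in polynomial time a \PURECIRC instance (by \citet{deligkas2022pure}) and a component-wise \LINVI instance with the constant $\rho$ of \Cref{thm:ppadLinVI}. From these I build $f$ as in \Cref{eq:utilityfunction}, with the stated $n,\epsilon,\delta$.

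\textbf{Step 2: the instance is legal and polynomially bounded.}
\begin{itemize}
\item $f$ and $\nabla f$ (given by \Cref{eq:partialDerivatives}) are computable by polynomial-size circuits.
\item Each $|H_v|\le n(m+1)\sqrt m$, $|M_i|\le \delta n$, and $g,\ell,\lambda$ have $O(1)$ derivatives. Hence $B$, $G$ and $L$ are polynomial in $d=mn\kappa$.
\item $1/\epsilon$ is polynomial in $m$, $\kappa$ and $1/\rho$, with $\rho$ a constant. (If $f$ must map into $[0,1]$, rescale by $B$ and shrink $\epsilon$ accordingly; this changes everything only polynomially.)
\end{itemize}

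\textbf{Step 3: decoding a \GDAFP solution $(x,y)$.} I first scan all $q\in V$ and $i\in[n]$ and test whether $x^q_i$ satisfies $(Dx^q_i+c)_j(z'_j-x^q_{i,j})\ge-\rho$ for all $j$ and $z'_j$. This is checkable coordinatewise, since the worst $z'_j$ is $0$ or $1$. If some $x^q_i$ passes, it is a \LINVI solution and maps back to \EOTL. Otherwise, \Cref{lm:main} guarantees the consistency implications for every $q$, and I define $b$ by \Cref{eq:interpretation}.

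\textbf{Step 4: $b$ satisfies every gate.} Write $\lambda_u=\lambda(\|x^u-y^u\|^2)$, and note $b(u)=1\iff\lambda_u=1$ and $b(u)=0\iff\lambda_u=0$.

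For a \NOR gate $(u,v,w)$:
\begin{itemize}
\item If $b(u)=b(v)=0$, then $s_w=g(0)=1$, so consistency gives $\lambda_w=1$, i.e.\ $b(w)=1$.
\item If $b(u)=1$ or $b(v)=1$, then the argument of $g$ is at least $1\ge1/2$ because $\lambda\ge0$. So $s_w=0$, hence $\lambda_w=0$ and $b(w)=0$.
\end{itemize}

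For a \PURIFY gate $(u,v,w)$:
\begin{itemize}
\item If $b(u)=1$, then $s_v=\ell(5/4)=1$ and $s_w=\ell(3/4)=1$, so $b(v)=b(w)=1$.
\item If $b(u)=0$, then $\ell(1/4)=\ell(-1/4)=0$, so $b(v)=b(w)=0$.
\item In every case, including $\lambda_u\in(0,1)$, property iii) of $\ell$ says at least one of $s_v=\ell(\lambda_u+1/4)$ and $s_w=\ell(\lambda_u-1/4)$ lies in $\{0,1\}$. I must check that iii) holds for all $z\in[0,1]$, not just $(0,1)$. The window $(5/12,7/12)$ has width $1/6<1/2$, so the two shifted arguments cannot both land in it. Consistency then makes the corresponding output pure.
\end{itemize}

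Hence $b$ is a \PURECIRC solution, which maps back to \EOTL.

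Each step of decoding is polynomial time, so \GDAFP is \PPAD-hard. The only delicate point is the bookkeeping in Step 2: verifying that the $\Delta_q$ and $H_v$ contributions keep $G$, $L$ and $B$ polynomial despite the $n$ copies. The genuine difficulty is entirely inside \Cref{lm:main}, which I take as given here.
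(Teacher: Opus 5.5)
Your proposal is correct and matches the paper's proof: you decode a \LINVI solution if some $x^q_i$ qualifies, otherwise you invoke \Cref{lm:main} and check the \NOR and \PURIFY gates case by case. Your width-$1/6$ window argument for the $\bot$ case of \PURIFY is an equivalent rephrasing of the paper's ``$s_v=1$ or $s_w=0$''; one harmless bookkeeping slip is that the per-copy bound is $|\langle Dx^v_i+c,y^v_i-x^v_i\rangle|\le m(m+1)$, so $|H_v|\le nm(m+1)$ rather than $n(m+1)\sqrt m$, which is still polynomial.
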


\begin{proof}
Consider any solution $(x,y)$ to \GDAFP.
If there exists a $q\in V$ and $i \in [n]$ such that $x^q_i$ is a solution to \LINVI, then we recover a solution to \LINVI, concluding the proof.

Hence, in the following, we assume that such $x^q_i$ does not exist and build a solution to \PURECIRC. By \Cref{lm:main}, we get that for all vertices $q \in V$:
\begin{subequations}\label{eq:consistency}
        {\begin{align}
        &s_{q}(x,y)= 1\implies \lambda(\|x^q-y^q\|^2)=1\\
        &s_{q}(x,y)= 0\implies \lambda(\|x^q-y^q\|^2)=0.
        \end{align} }
\end{subequations}

We build the assignment $b(\cdot)$ as follows.

\begin{equation}\label{eq:purecirc}
b(v)= \begin{cases}
\lambda(\lVert x^v-y^v\rVert^2) & \text{if } \lambda(\lVert x^v-y^v\rVert^2)\in \{0,1\} \\
\bot & \text{otherwise}
\end{cases}
\end{equation}

In the following, we show that this assignment satisfies all gate constraints.
We start considering the $\NOR$ gates.
 Consider a gate $(u,v,w)\in \Gcal_{\NOR}$. Since $w$ is the output of this $\NOR$ gate, $s_w(x,y)$ simplifies to $s_w(x,y)=g(\lambda(\|x^u-y^u\|^2)+\lambda(\|x^v-y^v\|^2))$.
Then:
\begin{itemize}
    \item If $b(u)=b(v)=0$, then $s_w(x,y)=1$ since $g(0)=1$. Hence, by \Cref{eq:consistency} it holds $b(w)=1$.
    \item If $b(u)$ or $b(v)$ are equal to $1$, then $s_w(x,y)=0$ since $g(q)=0$ for $q\ge 1/2$. Hence, by \Cref{eq:consistency} it holds $b(w)=0$.
    \item Otherwise, any $b(w)\in \{0,1,\bot\}$ satisfies the gate condition.
\end{itemize}

Now, consider the $\PURIFY$ gates.
Consider a gate $(u,v,w)\in \Gcal_{\PURIFY}$. Similarly to $\NOR$ gates, the expression of $s_v(x,y)$ and $s_w(x,y)$ (which are the outputs of the $\PURIFY$ gate), simplifies to $s_v(x,y)=\ell(\lambda(\|x^u-y^u\|^2)+1/4)$ and $s_w(x,y)=\ell(\lambda(\|x^u-y^u\|^2)-1/4)$. Then:
\begin{itemize}
    \item If $b(u)=1$, then $s_{v}(x,y)= 1$ and $s_{w}(x,y)=1$ by the definition of $\ell$. Hence, by \Cref{eq:consistency} it holds $b(v)=b(w)=1$. 
    \item If $b(u)=0$, then $s_{v}(x,y)= 0$ and $s_{w}(x,y)=0$ by the definition of $\ell$. Hence, by \Cref{eq:consistency} it holds $b(v)=b(w)=0$. 
    \item If $b(u)=\bot$, then $s_{v}(x,y)= 1$ or $s_{w}(x,y) = 0$. Indeed, if $s_v(x,y)\neq 1$ then $\lambda(\lVert x^u-y^u\rVert^2)+1/4\le 7/12$ and $s_w(x,y)=0$. On the other hand if $s_w(x,y)\neq 0$ then $\lambda(\lVert x^u-y^u\rVert^2)-1/4\ge 5/12$ and thus $s_v(x,y)=1$.
    Hence, by \Cref{eq:consistency} it holds $b(v)=1$ or $b(w)=0$.
\end{itemize}
This shows that $b(\cdot)$ is a valid assignment to \PURECIRC.
\end{proof}

\section{Proof of \Cref{lm:main}}\label{sec:proofmainlemma}
To prove the main technical lemma, we first provide some auxiliary results regarding the properties of any $\GDAFP$ solution.
In \cref{sec:char}, we argue that for any solution $(x,y)$ of \GDAFP, the distance $|x^q_{i,j}-y^q_{i,j}|$ can be controlled by a function of the other parameters.
This is then used in \Cref{sec:log} to prove a sublinear upper bound of $O(\log n)$ on $|\Delta_q(x,y)|$ in any $\GDAFP$ solution.
In \Cref{sec:many}, we show that, for each node $q$, there are many indices $i\in[n]$ such that $|M_i+\Delta_q|\le 1$ (the ``well-guessed'' copies).
In \Cref{sec:final}, we assume the first condition in \Cref{lm:main} does not hold and prove that the second condition (consistency) must then hold.
Finally, \Cref{sec:endProof} completes the proof.

\subsection{Characterization of $|x_{i,j}^q-y_{i,j}^q|$ for Solutions of \GDAFP} \label{sec:char}

Consider a node $q\in V$, an index $i \in [n]$, and a $j\in [m]$.
We bound the distance between $x^q_{i,j}$ and $y^q_{i,j}$ for any solution to \GDAFP as a function of the gate output $s_q(x,y)$ and the term $|M_i+\Delta_q|$.

\begin{lemma}\label{lm:smallDifference}
    Consider any $q\in V$, $i \in [n]$ and a $j \in [m]$. Then, for any solution $(x,y)\in[0,1]^{2d}$ to \GDAFP, if $M_i+\Delta_q(x,y)\neq 0$, then $(x,y)$ satisfies
    \[ 
    |x^q_{i,j}-y^q_{i,j}|\le \frac{3s_q(x,y)m}{|M_i+\Delta_q(x,y)|}+\sqrt{\frac{\epsilon}{|M_i+\Delta_q(x,y)|}}.
    \]
\end{lemma}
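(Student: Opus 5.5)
Write $a=x^q_{i,j}$, $b=y^q_{i,j}$, $K=M_i+\Delta_q(x,y)$, $s=s_q(x,y)\in[0,1]$, and $r=a-b$. By the partial derivatives in \eqref{eq:partialDerivatives}, we have
\[
\partial_{x^q_{i,j}}f = s A + 2Kr,
\qquad
\partial_{y^q_{i,j}}f = s C - 2Kr,
\]
where $A=(D^\top(y^q_i-x^q_i))_j-(Dx^q_i+c)_j$ and $C=(Dx^q_i+c)_j$. Since $D\in[-1,1]^{m\times m}$, $c\in[-1,1]^m$ and all variables lie in $[0,1]$, we get $|A|\le 2m+1\le 3m$ and $|C|\le m+1\le 3m$ (for $m\ge 1$). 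The plan is to exploit the \GDAFP conditions only along the single coordinate pair $(a,b)$: we test $x$ with the deviation $x'=b$ and $y$ with the deviation $y'=a$. Both deviations lie in $[0,1]$, so they are admissible in \Cref{def:gdafp}.

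Plugging $x'=b$ into the $x$-condition gives $\partial_{x^q_{i,j}}f\cdot(b-a)\le\epsilon$, that is,
\[
-sAr-2Kr^2\le\epsilon.
\]
Plugging $y'=a$ into the $y$-condition gives $-\partial_{y^q_{i,j}}f\cdot(a-b)\le\epsilon$, that is,
\[
-sCr+2Kr^2\le\epsilon.
\]
One of these two inequalities carries $+2Kr^2$ on the left, depending on the sign of $K$. If $K>0$, we use the second inequality and obtain $2Kr^2\le \epsilon+sCr\le\epsilon+3sm|r|$. If $K<0$, we use the first and obtain $2|K|r^2\le\epsilon+3sm|r|$. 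In both cases,
\[
2|K|r^2\le 3sm|r|+\epsilon.
\]
This sign case split is the only delicate point of the argument. It is exactly where the opposite signs of the regularizer term in the two players' gradients are used.

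It remains to solve this quadratic inequality in $|r|$. Its roots give
\[
|r|\le \frac{3sm+\sqrt{9s^2m^2+8|K|\epsilon}}{4|K|}.
\]
Applying $\sqrt{u+v}\le\sqrt u+\sqrt v$, this is at most
\[
\frac{6sm}{4|K|}+\sqrt{\frac{\epsilon}{2|K|}}\le \frac{3sm}{|K|}+\sqrt{\frac{\epsilon}{|K|}},
\]
which is the claimed bound. Alternatively, one can split into the cases $3sm|r|\ge\epsilon$ and $3sm|r|<\epsilon$ and handle each directly. I expect no real obstacle beyond checking the constant bound $|A|,|C|\le 3m$ and the sign bookkeeping.
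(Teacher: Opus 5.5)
Your proposal is correct and follows the paper's proof essentially step for step. It uses the same pair of deviations ($x'=y^q_{i,j}$ for the $x$-player, $y'=x^q_{i,j}$ for the $y$-player), the same $3m$ bound on the linear coefficients, the same case split on the sign of $M_i+\Delta_q(x,y)$, and the same solution of the resulting quadratic inequality in $|x^q_{i,j}-y^q_{i,j}|$. The only cosmetic difference is that you keep the factor $2$ explicit, reaching the slightly sharper intermediate bound $\frac{3s_q(x,y)m}{2|M_i+\Delta_q(x,y)|}+\sqrt{\epsilon/(2|M_i+\Delta_q(x,y)|)}$, whereas the paper silently absorbs this factor when ``substituting back.''
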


\begin{proof}
    Since $(x,y)$ is a solution to \GDAFP we get that

\[ \frac{\partial f(x,y)}{\partial x^{q}_{i,j}} ( y^q_{i,j}-x^q_{i,j})  \le \epsilon
\quad\text{and}\quad
-\frac{\partial f(x,y)}{\partial y_{i,j}^q} (x^q_{i,j}-y^q_{i,j}) \le \epsilon,\]

where we considered the optimality condition with respect to $y^q_{i,j}$ for the first player and with respect to $x^q_{i,j}$ for the second player.

Computing the values of the partial derivatives according to \Cref{eq:partialDerivatives}, we get

\begin{subequations}
\begin{align}
&\left( s_q(x,y)\left[\left(D^\top(y_i^q-x_i^q)\right)_j-(Dx_i^q+c)_j\right]+2\left(M_i+\Delta_q(x,y)\right) \left(x_{i,j}^q-y_{i,j}^q\right)\right)  \left(  y^q_{i,j}- x^q_{i,j}\right) \le \epsilon \label{eq:KKTx}\\
& \left(-s_q(x,y)\left(Dx_i^q+c\right)_j+2\left(M_i+\Delta_q(x,y)\right)\left(x_{i,j}^q-y_{i,j}^q\right) \right) \left(x^q_{i,j}-y^q_{i,j} \right) \le \epsilon.\label{eq:KKTy}
\end{align}
\end{subequations}
Now we observe that \Cref{eq:KKTx} is of the form $-sd_1z-Mz^2\le \epsilon$, where $s=s_q(x,y)\ge 0$, $d_1=(D^\top(y_i^q-x_i^q))_j-(Dx_i^q+c)_j$, $M=2(M_i+\Delta_q(x,y))$ and $z=(x_{i,j}^q-y_{i,j}^q)$. Similarly, \Cref{eq:KKTy} is of the form $sd_2z+Mz^2\le \epsilon$, where $d_2=-(Dx_i^q+c)_j$. Clearly $|d_1|,|d_2|\le3m$, indeed, by H\"older's inequality,
\begin{align*}
    |d_1|&\le |(D^\top(y_i^q-x_i^q))_j|+|(Dx_i^q+c)_j|\\
    &\le \|D^\top_j\|_1\|y_i^q-x_i^q\|_\infty+\|D_j\|_1\|x_i^q\|_\infty+|c_j|\\
    &\le 3m,
\end{align*}
and similarly for $d_2$.

Now we prove that $|M||z|^2\le \epsilon + 3ms|z|$.
Indeed, 
\begin{itemize}
\item If $M<0$ then the first equation gives $-sd_1z+|M|z^2\le\epsilon$ and thus $|M|z^2\le\epsilon+zsd_1\le \epsilon + 3ms|z|$.
\item  If $M>0$ then the second equation gives $Mz^2\le\epsilon-s d_2 z\le \epsilon+3ms|z|$.
\end{itemize}

Solving the quadratic equation $|M||z|^2\le \epsilon + 3ms|z|$ in $|z|$ gives that 
\begin{align*}
|z|&\le \frac{3ms+\sqrt{9m^2s^2+4|M|\epsilon}}{2|M|}\\
&\le \frac{3ms+\sqrt{9m^2s^2}+\sqrt{4|M|\epsilon}}{2|M|}\\
&\le\frac{3ms}{|M|}+\sqrt{\frac{\epsilon}{|M|}},
\end{align*}
which is the desired bound if we substitute back the values of $s$ and $|M|$. 
Note that the other root of the quadratic equation, namely $\frac{3ms-\sqrt{9m^2s^2+4|M|\epsilon}}{2|M|}$, does not need to be considered as it is negative.
\end{proof}

\subsection{$|\Delta_q(x,y)|$ is Sublinear in $n$} \label{sec:log}

In this section, we show that any solution to $\GDAFP$ guarantees that the undesired component $|\Delta_q(x,y)|$ is sufficiently small for each node $q\in V$. Specifically, we show that it grows at most logarithmically in $n$ (for small enough $\epsilon$). As we will see, this implies that a polynomially large $n$ is sufficient to guarantee that $\max_{i \in [n]} |M_i|> |\Delta_q(x,y)|$. 

We start by providing an upper bound on $\|x^q-y^q\|_1$ (\Cref{lm:norm1}), which is then used to upper-bound $|\Delta_q(x,y)|$ (\Cref{lm:nablaSmall}).

\begin{lemma}\label{lm:norm1}
 For any solution $(x,y)\in[0,1]^{2d}$ of \GDAFP, it holds:
 \[
 \lVert x^q-y^q\rVert_1 \le  14m^2\frac{1}{\delta}\log(n)+mn\sqrt{\frac\epsilon\delta} \quad \forall q \in V.
 \] 
\end{lemma}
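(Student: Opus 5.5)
Fix a node $q\in V$ and write $\Delta=\Delta_q(x,y)$, $s=s_q(x,y)\in[0,1]$. The plan is to sum the per-coordinate bound of \Cref{lm:smallDifference} over $i\in[n]$ and $j\in[m]$. The key point is that the denominators $|M_i+\Delta|$ run over a shifted grid with spacing $\delta$. The first term of \Cref{lm:smallDifference} then sums to a harmonic series, and the second term gives the $\sqrt{\epsilon/\delta}$ contribution. Indices $i$ with $|M_i+\Delta|$ small are handled separately with the trivial bound $|x^q_{i,j}-y^q_{i,j}|\le 1$.

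Since $M_i=\delta(i-n/2)$, the values $M_i+\Delta$ for $i\in[n]$ form an arithmetic progression with step $\delta$. Let $i^\star$ be an index minimizing $|M_i+\Delta|$. For any integer $k\ge 1$, at most two indices satisfy $|M_i+\Delta|\in[(k-1)\delta,k\delta)$, and for every $i\neq i^\star$ we have $|M_i+\Delta|\ge \delta\lfloor |i-i^\star|/2\rfloor$, up to a harmless constant shift. Concretely, I would take as the ``bad'' set the indices with $|M_i+\Delta|<\delta$, of which there are at most $2$. For these, and for the zero case excluded by \Cref{lm:smallDifference}, I use $|x^q_{i,j}-y^q_{i,j}|\le 1$, which contributes at most $2m$ in total. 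For every other index, \Cref{lm:smallDifference} together with $s\le 1$ gives
\[
|x^q_{i,j}-y^q_{i,j}|\le \frac{3m}{|M_i+\Delta|}+\sqrt{\frac{\epsilon}{|M_i+\Delta|}}\le \frac{3m}{|M_i+\Delta|}+\sqrt{\frac{\epsilon}{\delta}}.
\]
Summing over $j$ and over the good indices $i$, the second term contributes at most $mn\sqrt{\epsilon/\delta}$, which is exactly the second summand of the claim. For the first term, I order the good indices by $|M_i+\Delta|$: the $k$-th pair has $|M_i+\Delta|\ge k\delta$. Hence
\[
\sum_{i\text{ good}}\frac{3m}{|M_i+\Delta|}\le 2\cdot\frac{3m}{\delta}\sum_{k=1}^{n}\frac1k\le \frac{6m}{\delta}(1+\log n).
\]
Multiplying by $m$ (the sum over $j$) gives $\frac{6m^2}{\delta}(1+\log n)$.

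Finally, I absorb the constants: $2m+\frac{6m^2}{\delta}(1+\log n)\le \frac{14m^2}{\delta}\log n$. This holds since $\delta<1$ and $\log n\ge 1$ (because $n$ is huge), so $2m+6m^2/\delta\le 8m^2\log n/\delta$. The main obstacle is only the bookkeeping of the grid-counting step. One must check carefully that at most two indices fall in each band $[(k-1)\delta,k\delta)$, and that the indices excluded from \Cref{lm:smallDifference} (those with $M_i+\Delta=0$) are covered by the bad set. Both follow from the progression having step exactly $\delta$. The generous constant $14$ leaves ample slack for these details.
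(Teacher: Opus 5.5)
Your proof is correct and follows the paper's argument essentially step for step. The at most two indices with $|M_i+\Delta_q|<\delta$ are bounded trivially by $2m$, \Cref{lm:smallDifference} is applied to the remaining indices with $\sqrt{\epsilon/|M_i+\Delta_q|}\le\sqrt{\epsilon/\delta}$, and the reciprocals are summed as a harmonic series with at most two indices per unit band. The only cosmetic difference is the constant bookkeeping: you keep $1+\log n$ and absorb the extra constant using $\delta<1$ and $\log n\ge 1$, whereas the paper bounds the harmonic sum by $2\log n$ directly.
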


\begin{proof}
    Recall that $M_i = (-\frac{n}{2}+ i)\delta$ for all $i\in[n]$ and define $h_0=\frac n2-\frac{\Delta_q(x,y)}\delta$. Then, it follows that $|M_i+\Delta_q(x,y)|=\delta|h_0-i|$. Now, observe that $|M_i+\Delta_q(x,y)|<\delta$ if and only if $|h_0-i|<1$ and that the $i$'s such that $|h_0-i|<1$ are at most $2$.
    We also define
    \[
    S_{\ge\delta}=\sum_{i\in[n]: |M_i+\Delta_q(x,y)|\ge\delta}\frac{1}{|M_i+\Delta_q(x,y)|} = \frac{1}{\delta}\sum_{i\in[n]:|h_0-i|\ge1}\frac{1}{|h_0-i|}.
    \]
    
    Finally, we define a bucket $B_k=\left\{i\in[n]:|i-h_0|\in[k,k+1)\right\}$ for all $k\in\mathbb{N}$.
    Then, for all $k\in\mathbb{N}$ we have that $|B_k|\le 2$ and for each $i\in B_k$ it holds
    \[
    \frac{1}{|M_i+\Delta_q(x,y)|}=\frac{1}{\delta|h_0-i|}\le\frac{1}{\delta k}\quad\forall i\in B_k.
    \]

    Clearly, there are at most $n$ non-empty buckets that cover $[n]$. More formally, there exists a $j\ge 0$ such that all $i$'s such that $|i-h_0|\ge 1$ are contained in $\bigcup_{k\in[n]} B_{k+j}$. Thus, the following inequalities hold
    \begin{align*}
        S_{\ge\delta} \le\frac1\delta\sum_{k=1}^n\sum_{i\in B_{k+j}}\frac{1}{|h_0-i|}
        \le \frac{2}{\delta}\sum_{k=1}^n\frac{1}{k+j}
        \le \frac{2}{\delta}\sum_{k=1}^n\frac{1}{k}.
    \end{align*}
    Hence, $S_{\ge\delta}\le \frac{4}{\delta}\log(n)$ where we used the upper bound $2\log(n)$ of the harmonic series.
    Now, we bound $\|x^q-y^q\|_1$ as follows
    \begin{align*}
        \|x^q-y^q\|_1&=\sum_{i\in[n]}\|x^q_i-y^q_i\|_1\\
        &=\sum_{i\in[n]: |M_i+\Delta_q(x,y)|\ge \delta} \|x_i^q-y_i^q\|_1+\sum_{i\in[n]: |M_i+\Delta_q(x,y)|< \delta} \|x_i^q-y_i^q\|_1\\
        &\le \sum_{i\in[n]: |M_i+\Delta_q(x,y)|\ge \delta} \|x_i^q-y_i^q\|_1+2m\tag{$\|x_i^q-y_i^q\|_1\le m$}\\
        &\le \sum_{i\in[n]:|M_i+\Delta_q(x,y)|\ge \delta}\sum_{j\in[m]} \left[\frac{3s_q(x,y)m}{|M_i+\Delta_q(x,y)|}+\sqrt{\frac{\epsilon}{|M_i+\Delta_q(x,y)|}}\right]+2m\tag{\Cref{lm:smallDifference}}\\
        &\le 3s_q(x,y)m^2S_{\ge\delta}+mn\sqrt{\frac\epsilon\delta}+2m\\
        &\le 12s_q(x,y)m^2 \frac{1}{\delta}\log(n)+mn\sqrt{\frac\epsilon\delta}+2m\\
        &\le 14m^2 \frac{1}{\delta}\log\left(n\right)+mn\sqrt{\frac\epsilon\delta}
    \end{align*}
    concluding the proof.
\end{proof}

The previous lemma implies that $|\Delta_q(x,y)|$ grows at most logarithmically in $n$ when $\epsilon$ is sufficiently small. 

\begin{lemma}\label{lm:nablaSmall}
     For any solution $(x,y)\in[0,1]^{2d}$ of \GDAFP, it holds:
    \[ 
    |\Delta_{q}(x,y)| \le 2^9\frac{m^3\kappa}{\delta}\log(n)+2^5m^2n \kappa \sqrt{\frac\epsilon\delta} \quad \forall q \in V.
    \] 
\end{lemma}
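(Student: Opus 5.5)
The plan is to bound $|\Delta_q(x,y)|$ term by term from its definition. Each summand is a product of three factors: a derivative of a gate function ($g'$ or $\ell'$), the factor $\lambda'(\|x^q-y^q\|^2)$, and a linking function $H_w(x,y)$ or $H_v(x,y)$. The derivative factors are bounded by the explicit constants already recorded: $|g'|\le 6$, $|\ell'|\le 9$ and $|\lambda'|\le 3/2$. Hence each summand is at most $\tfrac{27}{2}\,|H_w(x,y)|$ for the relevant output node $w$.

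It therefore suffices to bound $|H_w(x,y)|$ uniformly in $w$. Write
\[
H_w(x,y)=\sum_{i\in[n]}\langle Dx^w_i+c,\,y^w_i-x^w_i\rangle .
\]
Since $D\in[-1,1]^{m\times m}$, $c\in[-1,1]^m$ and $x^w_i\in[0,1]^m$, every entry of $Dx^w_i+c$ is at most $m+1\le 2m$ in absolute value. Applying H\"older's inequality to each inner product and summing over $i$ gives
\[
|H_w(x,y)|\le 2m\,\|x^w-y^w\|_1 .
\]
Invoking \Cref{lm:norm1} for the node $w$ then yields
\[
|H_w(x,y)|\le 28\,\frac{m^3}{\delta}\log(n)+2m^2n\sqrt{\frac{\epsilon}{\delta}} .
\]
This uses that \Cref{lm:norm1} holds for every node, not just $q$.

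Next I count the summands. Each gate contributes to $\Delta_q$ only if $q$ is one of its inputs. A NOR gate contributes at most two terms (if $q$ appears as both $u$ and $v$, but the nodes of a gate are distinct, so only one). A PURIFY gate with input $q$ contributes two terms. Every node is the output of exactly one gate, so the number of gates equals at most $\kappa$. Hence the total number of nonzero summands is at most $2\kappa$.

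Multiplying out gives
\[
|\Delta_q(x,y)|\le 2\kappa\cdot\tfrac{27}{2}\Bigl(28\,\frac{m^3}{\delta}\log(n)+2m^2n\sqrt{\frac{\epsilon}{\delta}}\Bigr)
= 756\,\frac{m^3\kappa}{\delta}\log(n)+54\,m^2n\kappa\sqrt{\frac{\epsilon}{\delta}} .
\]
This is slightly above the stated $2^9$ and $2^5$, so the constants need tightening. First, there is slack in the bound on the entries of $Dx+c$ and in \Cref{lm:norm1}. More importantly, a PURIFY gate with input $q$ has $\ell'$ evaluated at $\lambda+1/4$ and at $\lambda-1/4$, and these two arguments cannot both lie in the transition region $(5/12,7/12)$. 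So each gate effectively contributes at most one nonzero term, bounded by $9\cdot\tfrac32=13.5$ times $|H|$, and the summand count drops to at most $\kappa$. With that refinement the bound is $\le 378\le 2^9$ and $27\le 2^5$, matching the statement.

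The main obstacle is only this bookkeeping of constants and of the number of gates having $q$ as input. The substantive ingredient, the $\ell_1$ bound on $\|x^w-y^w\|_1$, is already supplied by \Cref{lm:norm1}.
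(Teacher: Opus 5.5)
Your proof is correct and takes essentially the same route as the paper. You bound $|g'|,|\ell'|,|\lambda'|$ by $6,9,3/2$, use H\"older to get $|H_w(x,y)|\le 2m\|x^w-y^w\|_1$, apply \Cref{lm:norm1} at every node $w$, and multiply by at most $\kappa$ nonzero summands, obtaining $378\le 2^9$ and $27\le 2^5$. The paper takes the count of $\kappa$ from there being at most $\kappa$ gates; your observation that $\ell'(\lambda+1/4)$ and $\ell'(\lambda-1/4)$ cannot both be nonzero is a valid justification, so your initial $2\kappa$ count was only pessimistic. A simpler one also works: each summand carries $H_w$ for a distinct output node $w$, so there are at most $|\Gcal_{\NOR}|+2|\Gcal_{\PURIFY}|=\kappa$ of them.
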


\begin{proof}
We observe that $q$ is the input of at most $\kappa$ gates and that the derivatives of $g$, $\lambda$, and $\ell$ are constant with absolute values upper bounded by $6$, $3/2$, and $9$, respectively. Moreover, $|\langle D x^w_i+c,y^w_{i}-x^w_{i}\rangle|\le \|Dx_i^w+c\|_\infty\|y_i^w-x_i^w\|_1\le 2m\|x_i^w-y_i^w\|_1$ by H\"older's inequality. This implies that $|H_w(x,y)|\le 2m\|x^w-y^w\|_1$.
Thus
\[
|\Delta_q(x,y)|\le 14\kappa \max_{w\in V}|H_w(x,y)|\le 28m\kappa\left(14m^2\frac1\delta\log(n)+mn\sqrt{\frac\epsilon\delta}\right)\le2^9\frac{m^3\kappa}{\delta}\log(n)+2^5m^2n\kappa \sqrt{\frac\epsilon\delta}, 
\]
where in the first inequality we bounded the product of the two derivatives ($g$ and $\lambda$ or $\ell$ and $\lambda$) with $9\cdot \frac{3}{2}\le14$ and we used that there are at most $\kappa$ gates, while in the second inequality we used \Cref{lm:norm1}.
\end{proof}

\subsection{$|\Delta_q(x,y)+M_i|\le 1$ for $\Omega(1/\delta)$ Many $i$'s} \label{sec:many}

Consider a node $q\in V$. The result from \cref{lm:nablaSmall} shows that the undesired term $|\Delta_q(x,y)|$ grows at most logarithmically in $n$ . Then, by setting $n$ to be polynomially large in the instance size, we ensure that the range of the regularizer terms $M_i$ (which spans approximately $\delta n$) exceeds $|\Delta_q|$. This guarantees that there are $\Omega({1}/{\delta})$ indices $i\in [n]$ such that $|\Delta_q+M_i|\le 1$. These are the ``well-guessed'' copies where the regularizer $M_i$ neutralizes the undesired component $\Delta_q$.

\begin{lemma}\label{lm:manyCloseToZero}
   For any solution $(x,y)\in [0,1]^{2d}$ to \GDAFP, there exists at least $1/\delta$ indices $i\in[n]$ such that $|\Delta_q(x,y)+M_i|\le 1$.
\end{lemma}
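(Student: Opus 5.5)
The plan is to combine the bound of \Cref{lm:nablaSmall} with the explicit parameter choices for $n$, $\epsilon$ and $\delta$. First we show that $|\Delta_q(x,y)|\le \delta n/4$. The grid $\{M_i\}_{i\in[n]}$ then covers $[-\delta n/2+\delta,\delta n/2]$ with spacing $\delta$, so the interval $[-\Delta_q-1,-\Delta_q+1]$ lies well inside the range of the grid. A length-$2$ interval inside a regular grid of spacing $\delta$ contains at least $\lfloor 2/\delta\rfloor\ge 1/\delta$ grid points (using $\delta\le 1$, which holds since $\delta=\rho^2/(2^{10}m^2)$ and $\rho\le 1$ without loss of generality). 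Each such grid point $M_i$ satisfies $|M_i+\Delta_q|\le 1$, which gives the claim.

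Concretely, write $|M_i+\Delta_q|=\delta|h_0-i|$ with $h_0=n/2-\Delta_q/\delta$, as in the proof of \Cref{lm:norm1}. The condition $|M_i+\Delta_q|\le 1$ is equivalent to $|i-h_0|\le 1/\delta$. The integers in $[h_0-1/\delta,h_0+1/\delta]$ number at least $2/\delta-1\ge 1/\delta$. All of them lie in $[n]$ provided $1/\delta+1\le h_0\le n-1/\delta$. This in turn holds if $|\Delta_q|/\delta\le n/2-1/\delta-1$, and it suffices to show $|\Delta_q|\le \delta n/4$ together with $1/\delta+1\le n/4$. The latter is immediate from the parameter choice.

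The main step, and the only real obstacle, is the arithmetic check that the right-hand side of \Cref{lm:nablaSmall} is at most $\delta n/4$. We bound the two terms separately, each by $\delta n/8$.

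\textbf{Logarithmic term.} We need $2^9 m^3\kappa\log(n)/\delta\le \delta n/8$, i.e.\ $2^{12}m^3\kappa\log n\le \delta^2 n = \rho^4 n/(2^{20}m^4)$. That is, we need $n/\log n\ge 2^{32}m^7\kappa/\rho^4$. With $n=2^{64}m^{14}\kappa^2/\rho^8=(2^{32}m^7\kappa/\rho^4)^2$, and writing $A=2^{32}m^7\kappa/\rho^4$ so that $n=A^2$, we get $n/\log n=A^2/(2\log A)\ge A$ since $A\ge 2\log A$.

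\textbf{$\sqrt{\epsilon}$ term.} We need $2^5m^2n\kappa\sqrt{\epsilon/\delta}\le \delta n/8$, i.e.\ $\sqrt{\epsilon}\le \delta^{3/2}/(2^8m^2\kappa)$, i.e.\ $\epsilon\le \delta^3/(2^{16}m^4\kappa^2)=\rho^6/(2^{46}m^{10}\kappa^2)$. This is implied by $\epsilon=\rho^{18}/(2^{140}m^{28}\kappa^4)$, since $\rho\le1$ and $m,\kappa\ge1$.

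Combining the two bounds gives $|\Delta_q|\le\delta n/4$, and the counting argument above yields at least $1/\delta$ indices $i$ with $|\Delta_q(x,y)+M_i|\le 1$.
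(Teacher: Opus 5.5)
Your proof is correct and follows the paper's argument: you bound $|\Delta_q|\le \delta n/4$ through \Cref{lm:nablaSmall}, controlling each of its two terms by $\delta n/8$ (your step $n=A^2$ with $A\ge 2\log A$ is the paper's step $\log n\le\sqrt n$ in another form), and then you count the indices with $|i-h_0|\le 1/\delta$. Your direct count of the integers in $[h_0-1/\delta,h_0+1/\delta]$ is slightly cleaner than the paper's. The paper centers its index set at an integer $j$ with $|j-h_0|\le 1$, which by itself only gives $|i-h_0|\le 1/\delta+1$.
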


\begin{proof}

First, we show that our choice of the parameters $n$, $\delta$, and $\epsilon$ implies that  $|\Delta_q(x,y)|\le n\delta/4$. In particular, it is easy to verify that our choice of $\delta,n,\epsilon$ implies $2^{24}\frac{m^6\kappa^2}{\delta^4}\le n$ and $\epsilon\le\frac{\delta^3}{2^{16}m^4\kappa^2}$. By rearranging the first one, we get $\frac{2^9m^3\kappa}{\delta}\le\frac{\sqrt{n}\delta}{8}$ which gives:
\begin{align*}
\frac{2^9m^3\kappa}{\delta}\sqrt{n}\le\frac{n\delta}{8},\tag{A}
\end{align*}
while the second one gives:
\begin{align*}
2^5m^2n\kappa\sqrt{\frac\epsilon\delta}\le \frac{2^5m^2n\kappa}{\sqrt\delta}\frac{\delta^{3/2}}{2^8m^2\kappa}=\frac{\delta n}{8}.\tag{B}
\end{align*}
We can then use \Cref{lm:nablaSmall} to upper bound $|\Delta_q(x,y)|$ as follows:
\begin{align*}
    |\Delta_q(x,y)|&\le 2^9\frac{m^3\kappa}{\delta}\log(n)+2^5m^2n \kappa \sqrt{\frac\epsilon\delta}\tag{from \Cref{lm:nablaSmall}}\\
    &\le 2^9\frac{m^3\kappa}{\delta}\sqrt{n}+2^5m^2n \kappa \sqrt{\frac\epsilon\delta}\\
    &\le \frac{n\delta}{4}.\tag{by $(A)$ and (B)}
\end{align*}

    Now, defining $h_0=\frac{n}{2}-\frac{\Delta_q(x,y)}{\delta}$, we get that $\frac{1}{4}n\le h_0\le \frac{3}{4}n$, and $|M_i+\Delta_q(x,y)|=\delta|h_0-i|\le1$ if $|i-h_0|\le\frac{1}{\delta}$. 
Moreover, we get that there exists a $j \in [n/4,3n/4]$ such that $|j-h_0|\le 1$.
Since, for our choice of parameters, it holds that $\frac{1}{\delta}\le n/4-1$, we also get that the set of indices
\[
\mathcal{I}=\left\{i\in\Naturals:i=j+k, |k|\le \frac{1}{\delta}, k \in \mathbb{Z}\right\}
\]
is a subset of $[n]$.

Recall that, for each $i\in \mathcal{I}$, it holds $|i-h_0|\le \frac{1}{\delta}$ and hence $|M_i+\Delta_q(x,y)|\le 1$.
Noticing that $|\mathcal{I}|\ge 1/\delta$, we can conclude that there are at least $1/\delta$ many indices that guarantee $|\Delta_q(x,y)+M_i|\le 1$.
\end{proof}

\subsection{Consistency: Relating $s_q$ to $\lambda(\lVert x^q-y^q\rVert^2)$ }\label{sec:final}%

Now, we are finally ready to prove some results about consistency.
First, we show that if $s_q(x,y)=0$, then $\lVert x^q-y^q \rVert^2\le 3m$.
Notice that this result is independent of the first condition of \Cref{lm:main} and the existence of approximate \LINVI solutions.
This result directly follows from \Cref{lm:smallDifference}. Indeed, since $s_q(x,y)=0$,  $\lVert x^q-y^q\rVert^2$ is small for each $i$ such that $|M_i+\Delta_q(x,y)|\gg 0$.

\begin{lemma}\label{lm:sEqual0}
    Consider a solution $(x,y)\in[0,1]^{2d}$ to \GDAFP and a node $q$ such that $s_q(x,y)=0$. Then, $ \lVert x^q-y^q\rVert^2\le 3m$. 
\end{lemma}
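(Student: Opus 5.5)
With $s_q(x,y)=0$, \Cref{lm:smallDifference} reduces, for every $i$ with $M_i+\Delta_q(x,y)\neq 0$ and every $j\in[m]$, to
\[
|x^q_{i,j}-y^q_{i,j}|\le \sqrt{\epsilon/|M_i+\Delta_q(x,y)|}.
\]
My plan is to split the indices $i\in[n]$ by the size of $|M_i+\Delta_q(x,y)|$, exactly as in the proof of \Cref{lm:norm1}, and to bound $\|x^q-y^q\|^2=\sum_{i,j}(x^q_{i,j}-y^q_{i,j})^2$ piece by piece. As there, set $h_0=\frac n2-\frac{\Delta_q(x,y)}{\delta}$, so that $|M_i+\Delta_q(x,y)|=\delta|h_0-i|$.

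\textbf{Near indices.} The indices with $|h_0-i|<1$ number at most two. For these I use only the trivial bound $\|x^q_i-y^q_i\|^2\le m$, so together they contribute at most $2m$.

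\textbf{Remaining indices.} These satisfy $|M_i+\Delta_q(x,y)|\ge\delta$. Squaring the bound above gives
\[
(x^q_{i,j}-y^q_{i,j})^2\le \frac{\epsilon}{\delta|h_0-i|}.
\]
Summing over $j\in[m]$ and over these $i$, the bucket argument of \Cref{lm:norm1} (at most two indices per bucket $B_k$) bounds the total by
\[
m\,\frac{\epsilon}{\delta}\cdot 4\log n .
\]
Alternatively, one can use the cruder bound $(x^q_{i,j}-y^q_{i,j})^2\le \epsilon/\delta$ for each term, which gives a total of at most $mn\epsilon/\delta$.

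\textbf{Parameter check.} The last step is to verify $mn\epsilon/\delta\le m$, i.e.\ $n\epsilon\le\delta$. With the chosen parameters,
\[
n\epsilon=\frac{2^{64}m^{14}\kappa^2}{\rho^8}\cdot\frac{\rho^{18}}{2^{140}m^{28}\kappa^4}=\frac{\rho^{10}}{2^{76}m^{14}\kappa^2},
\]
which is far smaller than $\delta=\rho^2/(2^{10}m^2)$ because $\rho\le1$. The total is therefore at most $2m+m=3m$, as required.

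\textbf{Main obstacle.} The only delicate point is the near indices: there $|M_i+\Delta_q(x,y)|$ may vanish or be tiny, so \Cref{lm:smallDifference} gives nothing. The plan handles this by counting: there are at most two such indices, and each contributes at most $m$. This counting is exactly why the target constant is $3m$ rather than something close to $0$. The case $M_i+\Delta_q(x,y)=0$ belongs to the near indices and is absorbed by the same trivial bound.
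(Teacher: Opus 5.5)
Your proposal is correct and follows the paper's proof essentially step for step. The at most two indices with $|M_i+\Delta_q(x,y)|<\delta$ contribute at most $2m$; every other index contributes at most $m\epsilon/\delta$ by \Cref{lm:smallDifference} with $s_q=0$; and $n\epsilon\le\delta$ gives the total $3m$. Your explicit check of $n\epsilon\le\delta$, which the paper only asserts, is a welcome addition, and the logarithmic refinement is valid but not needed.
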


\begin{proof}
We observe that $|M_i+\Delta_q|<\delta$ for at most two indices $i$ and for those indices $\lVert x_{i}^q-y_{i}^q\rVert^2\le m$. For all the other indices $i$ and any $j \in [m]$, we get $| x_{i,j}^q-y_{i,j}^q|\le \sqrt{\frac{\epsilon}{|M_i+\Delta_q(x,y)|}} \le \sqrt{\frac{\epsilon}{\delta}}$ from \Cref{lm:smallDifference} since $s_q(x,y)=0$.
Thus,
\begin{align*}    
\|x^q_i-y^q_i\|^2&=\sum_{j\in[m]}|x_{i,j}^q-y_{i,j}^q|^2\\
&\le \sum_{j\in[m]}\frac{\epsilon}{\delta}=m\frac{\epsilon}{\delta}.
\end{align*}

 Combining the results for the indices such that $|M_i+\Delta_q|<\delta$ and $|M_i+\Delta_q|\ge\delta$ we get
\[
\|x^q-y^q\|^2\le2m+nm\frac{\epsilon}{\delta}\le3m,
\]
where we use that $\epsilon\le\delta/n$ by our choice of the parameters.
\end{proof}

To conclude the proof, we consider the case in which $s_q(x,y)=1$. Here, it is necessary to use the first condition of \Cref{lm:main}.
In particular, we prove that $\lVert x^q-y^q\rVert^2$ is large (at least $3m+1$) or that we find a solution to \LINVI.

\begin{lemma}\label{lm:functionOfEta}
Consider any solution $(x,y)\in[0,1]^{2d}$ to \GDAFP and any node $q\in V$ such that $s_{q}(x,y)=1$. If that there are no $i\in[n]$, such that $x^q_i$ is a solution to \LINVI, then
\( 
\lVert x^q-y^q\rVert^2 \ge 3m+1.
\)
\end{lemma}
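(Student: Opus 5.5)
The plan is to combine the ``well-guessed'' copies from \Cref{lm:manyCloseToZero} with a per-copy argument. For each such copy, the plan is to show that if $x^q_i$ and $y^q_i$ were close, then $x^q_i$ would already be an approximate \LINVI solution. Fix $q$ with $s_q(x,y)=1$, and let $\mathcal I\subseteq[n]$ be the set of at least $1/\delta$ indices with $|M_i+\Delta_q(x,y)|\le 1$ given by \Cref{lm:manyCloseToZero}. Since $\|x^q-y^q\|^2\ge\sum_{i\in\mathcal I}\|x^q_i-y^q_i\|^2$, it suffices to prove a uniform per-copy bound
\[
\|x^q_i-y^q_i\|^2\ge c\qquad\text{for all } i\in\mathcal I,
\]
where $c$ satisfies $c/\delta\ge 3m+1$. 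Given $\delta=\rho^2/(2^{10}m^2)$, this means $c$ of order $\rho^2/m$ (up to constants), or $c\gtrsim\rho^2/m^2$ if the counting is done more generously.

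For the per-copy bound I will argue by contrapositive. Fix $i\in\mathcal I$ and suppose that $|x^q_{i,j}-y^q_{i,j}|\le\eta$ for every $j\in[m]$. Use the \GDAFP condition for the $y$ player on coordinate $(q,i,j)$ with an arbitrary $y'_j\in[0,1]$. By \Cref{eq:partialDerivatives} with $s_q=1$, this reads
\[
-\bigl[(Dx^q_i+c)_j-2(M_i+\Delta_q)(x^q_{i,j}-y^q_{i,j})\bigr](y'_j-y^q_{i,j})\le\epsilon .
\]
Since $|M_i+\Delta_q|\le1$, this gives $(Dx^q_i+c)_j(y'_j-y^q_{i,j})\ge-\epsilon-2\eta$. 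Next, shift the base point from $y^q_{i,j}$ to $x^q_{i,j}$:
\[
(Dx^q_i+c)_j(z'_j-x^q_{i,j})=(Dx^q_i+c)_j(z'_j-y^q_{i,j})+(Dx^q_i+c)_j(y^q_{i,j}-x^q_{i,j}).
\]
Using $|(Dx^q_i+c)_j|\le m+1$, the right-hand side is at least $-\epsilon-2\eta-(m+1)\eta$. If $\eta$ is a small constant multiple of $\rho/m$, and using $\epsilon\ll\rho$, this is at least $-\rho$ for all $j$ and all $z'_j$. That would make $x^q_i$ a solution to the component-wise \LINVI of \Cref{thm:ppadLinVI}, contradicting the hypothesis. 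Hence some coordinate has $|x^q_{i,j}-y^q_{i,j}|>\eta$, so $\|x^q_i-y^q_i\|^2>\eta^2$. The $x$-player condition could also be used, via the extra $D^\top(y-x)$ term, but the $y$-player condition is cleaner.

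Summing over $\mathcal I$ then gives $\|x^q-y^q\|^2\ge\eta^2/\delta$. The remaining step is to check this against $3m+1$ with the chosen parameters.

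I expect the main obstacle to be exactly this constant bookkeeping. The naive estimate $\eta\approx\rho/(4m)$ gives $\eta^2/\delta\approx 64$, which does not scale with $m$. I will therefore first try to tighten the per-copy bound. One option is to bound $(Dx+c)_j(y_j-x_j)$ more carefully, since only one violating coordinate $j$ is needed. Another is to also use the $x$-player inequality to cancel the $(Dx+c)_j(y_j-x_j)$ cross term, so that $\eta$ becomes of order $\rho/\sqrt m$ rather than $\rho/m$. A further option is to count the copies with $|M_i+\Delta_q|\le1$ more generously, since the grid actually yields about $2/\delta$ of them. If none of these closes the gap, the fallback is to note that the parameter $\delta$ can be shrunk by an extra factor of $m$, which only changes $n$ and $\epsilon$ polynomially, so that $\eta^2/\delta\ge 3m+1$ holds.
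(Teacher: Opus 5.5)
Your computation is correct as far as it goes, but it does not prove the lemma. The obstacle you flag is real, not a matter of bookkeeping. Shifting the base point from $y^q_{i,j}$ to $x^q_{i,j}$ costs $(Dx^q_i+c)_j\,(y^q_{i,j}-x^q_{i,j})$, and $|(Dx^q_i+c)_j|$ can genuinely be of order $m$. Nothing in the $y$-player's condition controls this: a large $(Dx^q_i+c)_j$ only pins $y^q_{i,j}$ near an endpoint, while $x^q_{i,j}$ may still be $\eta$ away. So the $y$-route can only ever produce a coordinate with $|x^q_{i,j}-y^q_{i,j}|\gtrsim\rho/m$. That gives $\|x^q_i-y^q_i\|^2\gtrsim\rho^2/m^2$ per copy and $O(\rho^2/(m^2\delta))=O(1)$ in total, which falls below $3m+1$ for large $m$.

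Your proposed repairs do not close this:
\begin{itemize}
\item Counting $2/\delta$ copies gains only a constant factor, so your aside that generous counting would allow $c\gtrsim\rho^2/m^2$ is not right.
\item Refining the shift term has no slack, for the reason above.
\item Shrinking $\delta$ by a factor of $m$ changes the construction. You would then have to re-choose $n$ and $\epsilon$ and re-verify \Cref{lm:manyCloseToZero} and \Cref{lm:sEqual0}, so it does not establish the lemma for the instance as defined.
\end{itemize}

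The missing idea is the route you set aside as less clean. Use the $x$-player's condition at coordinate $(q,i,j)$, which is already based at $x^q_{i,j}$, and measure the closeness of copy $i$ in $\ell_1$ rather than coordinatewise. With $s_q=1$ and $|M_i+\Delta_q|\le 1$, it gives for all $z\in[0,1]$
\[
-(Dx^q_i+c)_j\,(z-x^q_{i,j})\le\epsilon+\bigl|(D^\top(x^q_i-y^q_i))_j\bigr|+2|x^q_{i,j}-y^q_{i,j}|\le\epsilon+3\|x^q_i-y^q_i\|_1,
\]
since every entry of $D$ lies in $[-1,1]$. Hence $\|x^q_i-y^q_i\|_1<\rho/10$ would make $x^q_i$ a \LINVI solution. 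So every well-guessed copy has $\|x^q_i-y^q_i\|_1\ge\rho/10$, and Cauchy--Schwarz gives $\|x^q_i-y^q_i\|^2\ge\rho^2/(100m)$, a loss of $m$ rather than $m^2$. Summing over the $1/\delta$ copies yields $\rho^2/(100m\delta)\ge 5m\ge 3m+1$; this is exactly the paper's argument.

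This is not a cancellation of your cross term. A coordinatewise threshold would still be only $\rho/m$ on the $x$-side, because $|(D^\top v)_j|\le m\|v\|_\infty$. The gain comes entirely from thresholding the whole copy in $\ell_1$, equivalently its $\ell_2$ norm at scale $\rho/\sqrt m$. That is what your option of ``$\eta$ of order $\rho/\sqrt m$'' needed to be in order to work.
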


\begin{proof}
    From \Cref{lm:manyCloseToZero}, for at least $1/\delta$ many $i\in[n]$, we have that $|M_i+\Delta_q(x,y)|\le 1$. 
    We start proving that for such $i$'s it holds $\|x_i^q-y_i^q\|_1\ge \rho/10$. 
    
    Assume by contradiction that for some such $i$'s we have $\|x_i^q-y_i^q\|_1< \frac\rho{10}$.
    Then, given that $(x,y)$ is a solution to \GDAFP, and by assumption $s_q(x,y)=1$, the optimality conditions of the first player require that for all $j\in[m]$ and $z\in[0,1]$:
    \[
    \left( (D^\top(y_i^q-x_i^q))_j-(Dx_i^q+c)_j+2(M_i+\Delta_q(x,y)) (x_{i,j}^q-y_{i,j}^q)\right)  (  z- x^q_{i,j}) \le \epsilon,
    \]
   
    which implies that for all $j\in[m]$ and $z\in[0,1]$
    \begin{align*}        
    -(D x^q_i+c)_j(  z- x^q_{i,j})&\le \epsilon+(D^\top(x_i^q-y_i^q))_j(z-x_{i,j}^q)+2(M_i+\Delta_q(x,y))(y_{i,j}^q-x_{i,j}^q)(z-x_{i,j}^q)\\
    &\le\epsilon + \frac{\rho}{10}+2\frac\rho{10}\le \rho,
    \end{align*}
    where we used $\epsilon\le\rho/2$. This is a contradiction, since this shows that $x_{i}^q$ is a solution to \LINVI.
    Thus, there are at least $1/\delta$ many $i$'s such that $\|x_i^q-y_i^q\|_1\ge\rho/10$.
    Hence, it holds
    \[
    \|x^q-y^q\|^2\ge\sum_{i\in[n]:|M_i+\Delta_q(x,y)|\le 1}\|x_i^q-y_i^q\|^2\ge \frac{1}{m} \sum_{i\in[n]:|M_i+\Delta_q(x,y)|\le 1}\|x_i^q-y_i^q\|_1^2 \ge  \frac{\rho^2}{100m\delta},
    \]
    where in the second to last inequality we used that $\|\cdot\|^2_1\le m\|\cdot\|^2$.
    The proof is concluded by observing that $\delta\le \frac{\rho^2}{2^9m^2}$ and hence $\frac{\rho^2}{100m\delta}\ge 5m$.
\end{proof}

\subsection{Piecing Everything Together} \label{sec:endProof}

\Cref{lm:main} directly follows from \Cref{lm:sEqual0}, \Cref{lm:functionOfEta} and the definition of the function $\lambda$.
In particular, if there exists a $q\in V$ and $i \in [n]$ such that $x^q_i$ is a solution to \LINVI, then the first condition of the lemma is satisfied.

Otherwise, the following holds:
\begin{itemize}
\item if $s_q(x,y)=0$, by \Cref{lm:sEqual0} it holds $\lVert x^q-y^q\rVert^2 \le 3m$ and,  by the definition of $\lambda$, we have $\lambda(\lVert x^q-y^q\rVert^2)=0$.
\item if $s_q(x,y)=1$, by \Cref{lm:functionOfEta} it holds $\lVert x^q-y^q\rVert^2 \ge 3m+1$ and,  by the definition of $\lambda$, we have $\lambda(\lVert x^q-y^q\rVert^2)=1$.
\end{itemize}
This concludes the proof of \Cref{lm:main}.

\section{Conclusion and Open Problems}

In this paper, we showed that min-max optimization with product constraints is \PPAD-hard even when all parameters are polynomially bounded. It remains an open question whether stronger inapproximability results exist, for instance, when all parameters except the dimensionality are constants.
Moreover, our reduction is based on white-box access to the \PURECIRC and \LINVI instances. On the other hand, \citet{daskalakis2021complexity}, used a reduction that had only black-box access to a Brouwer instance, and thus could use the unconditional lower bounds of \citet{hirsch1989exponential} to prove a lower bound on the number of queries to the gradients. Our current techniques do not preclude a time-inefficient but query-efficient algorithm (for instance, \citet{chen2024computing} recently showed a query-efficient algorithm for $\ell_\infty$ contractions).
It would be interesting to improve our reduction to exploit only black-box access to any \PPAD-complete problem for which there are known query lower bounds.
Lastly, the complexity of finding a solution in constant dimension (e.g., $d=2$) remains an intriguing open problem. Our current reduction relies on a high-dimensional construction and does not directly extend to low-dimensional settings.

\newpage
\appendix

\section*{Appendix}

\section{Omitted Proofs}\label{app:grad}

\lmGradComp*

\begin{proof}

We recall that $H_v(x,y)=\sum_{i\in[n]} \langle Dx_i^v+c,y_i^v-x_i^v\rangle$ for all $v\in V$. Thus, 
\[
\frac{\partial H_v(x,y)}{\partial x_{i,j}^v} = (D^\top(y_i^v-x_i^v))_j-(Dx_i^v+c)_j
\quad\text{and}\quad
\frac{\partial H_v(x,y)}{\partial y_{i,j}^v} = (Dx_i^v+c)_j.
\]

We are going to use the (composite) chain rule. Formally, given three functions $t,r,u:\Re\to\Re$, it holds:
\[
[t(r(u(x)))]'=t'(r(u(x)))r'(u(x))u'(x).
\]

\paragraph{Player $x$:}
First, let us compute the partial derivative of $f_1$ with respect to $x_{i,j}^q$:
    \begin{align*}
        \frac{\partial f_1(x,y)}{\partial x_{i,j}^q}%
        &=2(x_{i,j}^q-y_{i,j}^q)\sum_{(q,v,w)\in\Gcal_{\NOR}} g'(\lambda(\|x^q-y^q\|^2)+\lambda(\|x^v-y^v\|^2))\lambda'(\|x^q-y^q\|^2)H_w(x,y)\\
        &+2(x_{i,j}^q-y_{i,j}^q)\sum_{(u,q,w)\in\Gcal_{\NOR}} g'(\lambda(\|x^u-y^u\|^2)+\lambda(\|x^q-y^q\|^2))\lambda'(\|x^q-y^q\|^2)H_w(x,y)\\
        &+\left[(D^\top(y_i^q-x_i^q))_j-(Dx_i^q+c)_j\right]\sum_{(u,v,q)\in\Gcal_{\NOR}}g(\lambda(\|x^u-y^u\|^2)+\lambda(\|x^v-y^v\|^2)).
    \end{align*}
Then, we compute the partial derivative of $f_2$ with respect to $x_{i,j}^q$:
    \begin{align*}
        \frac{\partial f_2(x,y)}{\partial x_{i,j}^q} %
        &=2(x_{i,j}^q-y_{i,j}^q)\sum_{(q,v,w)\in\Gcal_{\PURIFY}} \ell'(\lambda(\|x^q-y^q\|^2)+1/4)\lambda'(\|x^q-y^q\|^2) H_v(x,y)\\
        &+2(x_{i,j}^q-y_{i,j}^q)\sum_{(q,v,w)\in\Gcal_{\PURIFY}} \ell'(\lambda(\|x^q-y^q\|^2)-1/4)\lambda'(\|x^q-y^q\|^2) H_w(x,y)\\
        &+\left[(D^\top(y_i^q-x_i^q))_j-(Dx_i^q+c)_j\right]\sum_{(u,q,w)\in \Gcal_{\PURIFY}}\ell(\lambda(\|x^u-y^u\|^2)+1/4)\\
        &+\left[(D^\top(y_i^q-x_i^q))_j-(Dx_i^q+c)_j\right]\sum_{(u,v,q)\in \Gcal_{\PURIFY}}\ell(\lambda(\|x^u-y^u\|^2)-1/4),
    \end{align*}
and finally the partial derivative of $\varphi$ with respect to $x^q_{i,j}$:
    \begin{align*}
        \frac{\partial\varphi(x,y)}{\partial x_{i,j}^q} &=2M_i(x_{i,j}^q-y_{i,j}^q).
    \end{align*}

\paragraph{Player $y$:}
Similarly, the partial derivative of $f_1$ with respect to $y_{i,j}^q$ is

\begin{align*}
    \frac{\partial f_1(x,y)}{\partial y_{i,j}^q}&=2(y_{i,j}^q-x_{i,j}^q)\sum_{(q,v,w)\in\Gcal_{\NOR}} g'(\lambda(\|x^q-y^q\|^2)+\lambda(\|x^v-y^v\|^2))\lambda'(\|x^q-y^q\|^2)H_w(x,y)\\
    &+2(y_{i,j}^q-x_{i,j}^q)\sum_{(u,q,w)\in\Gcal_{\NOR}}g'(\lambda(\|x^u-y^u\|^2)+\lambda(\|x^q-y^q\|^2))\lambda'(\|x^q-y^q\|^2)H_w(x,y)\\
    &+(Dx_i^q+c)_j\sum_{(u,v,q)\in\Gcal_{\NOR}}g(\lambda(\|x^u-y^u\|^2)+\lambda(\|x^v-y^v\|^2)).
\end{align*}

The partial derivative of $f_2$ with respect to $y_{i,j}^q$ is
\begin{align*}
    \frac{\partial f_2(x,y)}{\partial y^q_{i,j}}&=2(y_{i,j}^q-x_{i,j}^q)\sum_{(q,v,w)\in\Gcal_{\PURIFY}}\ell'(\lambda(\|x^q-y^q\|^2)+1/4)\lambda'(\|x^q-y^q\|^2)H_v(x,y)\\
    &+2(y_{i,j}^q-x_{i,j}^q)\sum_{(q,v,w)\in\Gcal_{\PURIFY}}\ell'(\lambda(\|x^q-y^q\|^2)-1/4)\lambda'(\|x^q-y^q\|^2)H_w(x,y)\\
    &+(Dx_i^q+c)_j\sum_{(u,q,w)\in\Gcal_{\PURIFY}}\ell(\lambda(\|x^u-y^u\|^2)+1/4)\\
    &+(Dx_i^q+c)_j\sum_{(u,v,q)\in\Gcal_{\PURIFY}}\ell(\lambda(\|x^u-y^u\|^2)-1/4).
\end{align*}

Finally, the partial derivative of $\phi$ with respect to $y^{q}_{i,j}$ is
\[
\frac{\partial \varphi(x,y)}{\partial y_{i,j}^q}= 2M_i(y_{i,j}^q-x_{i,j}^q).
\]

\paragraph{Combining the Terms:}
By defining
    \begin{align*}
        \Delta_q(x,y)&=\sum_{(q,v,w)\in\Gcal_{\NOR}} g'(\lambda(\|x^q-y^q\|^2)+\lambda(\|x^v-y^v\|^2))\lambda'(\|x^q-y^q\|^2)H_w(x,y)\\
        &+\sum_{(u,q,w)\in\Gcal_{\NOR}} g'(\lambda(\|x^u-y^u\|^2)+\lambda(\|x^q-y^q\|^2))\lambda'(\|x^q-y^q\|^2)H_w(x,y)\\
        &+\sum_{(q,v,w)\in\Gcal_{\PURIFY}} \ell'(\lambda(\|x^q-y^q\|^2)+1/4)\lambda'(\|x^q-y^q\|^2) H_v(x,y)\\
        &+\sum_{(q,v,w)\in\Gcal_{\PURIFY}} \ell'(\lambda(\|x^q-y^q\|^2)-1/4)\lambda'(\|x^q-y^q\|^2) H_w(x,y),
    \end{align*}
    and 
    \begin{align*}
        s_q(x,y)&=\sum_{(u,v,q)\in\Gcal_{\NOR}}g(\lambda(\|x^u-y^u\|^2)+\lambda(\|x^v-y^v\|^2))\\
        &+\sum_{(u,q,w)\in \Gcal_{\PURIFY}}\ell(\lambda(\|x^u-y^u\|^2)+1/4)\\
        &+\sum_{(u,v,q)\in \Gcal_{\PURIFY}}\ell(\lambda(\|x^u-y^u\|^2)-1/4),
    \end{align*}

    we can collect all the appropriate terms in $\partial_{x_{i,j}^q}f$ and $\partial_{y_{i,j}^q}f$ leading to
    \[
    \frac{\partial f(x,y)}{\partial x_{i,j}^q} =2(x_{i,j}^q-y_{i,j}^q)(M_i+\Delta_q(x,y))+\left[(D^\top(y_i^q-x_i^q))_j-(Dx_i^q+c)_j\right]s_q(x,y)
    \]
    and 
    \[
    \frac{\partial f(x,y)}{\partial y_{i,j}^q}=2(y_{i,j}^q-x_{i,j}^q)(M_i+\Delta_q(x,y))+(Dx_i^q+c)_js_q(x,y)
    \]
    as desired.
\end{proof}

\section*{Acknowledgments}
The work of MB was partially funded by the European Union. Views and opinions expressed
are however those of the author(s) only and do not necessarily reflect those of the European Union or the
European Research Council Executive Agency. Neither the European Union nor the granting authority can
be held responsible for them.

This work is supported by an ERC grant (Project 101165466 — PLA-STEER), by the FAIR (Future Artificial Intelligence Research) project PE0000013, funded by the NextGenerationEU program within the PNRRPE-AI scheme (M4C2, Investment 1.3, Line on Artificial Intelligence), and by the EU Horizon project ELIAS (European Lighthouse of AI for Sustainability, No. 101120237).

We thank Andrea Celli, Gabriele Farina, and the anonymous reviewers for their helpful comments and discussions. 
\newpage
\printbibliography

\end{document}